\newtheorem{theorem}{Theorem}
\newtheorem{definition}{Definition}
\newcommand{\qvbar}{{|}}
\newcommand{\qrangle}{{\rangle}}
\newcommand{\qlangle}{{\langle}}
\newcommand{\ket}[1]{\qvbar{#1}\qrangle}
\newcommand{\bra}[1]{\qlangle{#1}\qvbar}
\newcommand{\ketbra}[2]{\ket{#1}\bra{#2}}
\def\one{{\mathchoice {\rm 1\mskip-4mu l} {\rm 1\mskip-4mu l} {\rm
1\mskip-4.5mu l} {\rm 1\mskip-5mu l}}}
\newcommand{\tr}{\mathrm{tr}}
\newcommand{\cA}{{\cal A}}
\newcommand{\cQ}{{\cal Q}}
\newcommand{\rls}{\mathbb{R}}
\providecommand{\ignore}[1]{}
\begin{document}

\title{ Improving Quantum Clocks via Semidefinite Programming }
\author{Michael Mullan}
\author{Emanuel Knill}
\affiliation{National Institute of Standards and Technology} 
\affiliation{University of Colorado at Boulder}
\date{\today}
\begin{abstract}

The accuracies of modern quantum logic clocks have surpassed those of
standard atomic fountain clocks. These clocks also provide a greater
degree of control, because before and after clock queries, we are able to
apply chosen unitary operations and measurements.  Here, we take
advantage of these choices and present a numerical technique designed
to increase the accuracy of these clocks. We use a greedy approach,
minimizing the phase variance of a noisy classical oscillator with
respect to a perfect frequency standard after an interrogation step;
we do not optimize over successive interrogations or the probe
times.  We consider arbitrary prior frequency knowledge and compare
clocks with varying numbers of ions and queries interlaced with
unitary control. Our technique is based on the semidefinite
programming formulation of quantum query complexity, a method first
developed in the context of deriving algorithmic lower bounds.  The
application of semidefinite programming to an inherently continuous
problem like that considered here requires discretization; we derive
bounds on the error introduced and show that it can be made suitably
small. 
\end{abstract}

\maketitle

\section{Quantum Clocks}

\subsection{The Clock Protocol}
\label{sect:clocks_procedure}

Most atomic clocks are designed to lock a noisy classical oscillator to the
resonance of an atomic standard.  Typically, this is accomplished via
the following clock interrogation protocol:
\begin{enumerate}
\item \label{proc1:preparation}
  \textbf{Preparation}: The atomic system is prepared in some initial state.
\item \label{proc1:interrogation}
  \textbf{Query}: The classical
  oscillator and the atomic system interact.  This modifies the atomic
  state in some way that depends on both the resonant atomic frequency
  $\omega_0$ and the frequency of the classical oscillator $\omega$.
\item \label{proc1:measurement}
  \textbf{Measurement}: The atomic system
  is measured and provides some information about $\omega -
  \omega_0$.
\item \label{proc1:correction}
  \textbf{Correction}: The classical oscillator is adjusted based
  on this information, ideally reducing $|\omega - \omega_0|$.
\end{enumerate}
This protocol must be repeated indefinitely, as the noisy classical
oscillator drifts over time.  Furthermore, the information gained in
step~\ref{proc1:measurement} is always incomplete. Consequently, the
frequency of the classical oscillator is never known exactly and must
be described by a probability distribution.
Figure~\ref{fig:flowchart} illustrates how this distribution changes
as the clock is run.  In this probabilistic perspective, our goal is
to maximize our knowledge of the classical oscillator. For the purpose
of maintaining an accurate clock, we minimize the phase variance of
the classical oscillator with respect to the atomic frequency standard
by optimizing over state preparation (step~\ref{proc1:preparation})
and the post-query measurement (step~\ref{proc1:measurement}).  We
also consider interrogations consisting of multiple queries interlaced
with unitary control.
	 
\begin{figure}[ht]
\includegraphics[width=.5\textwidth]{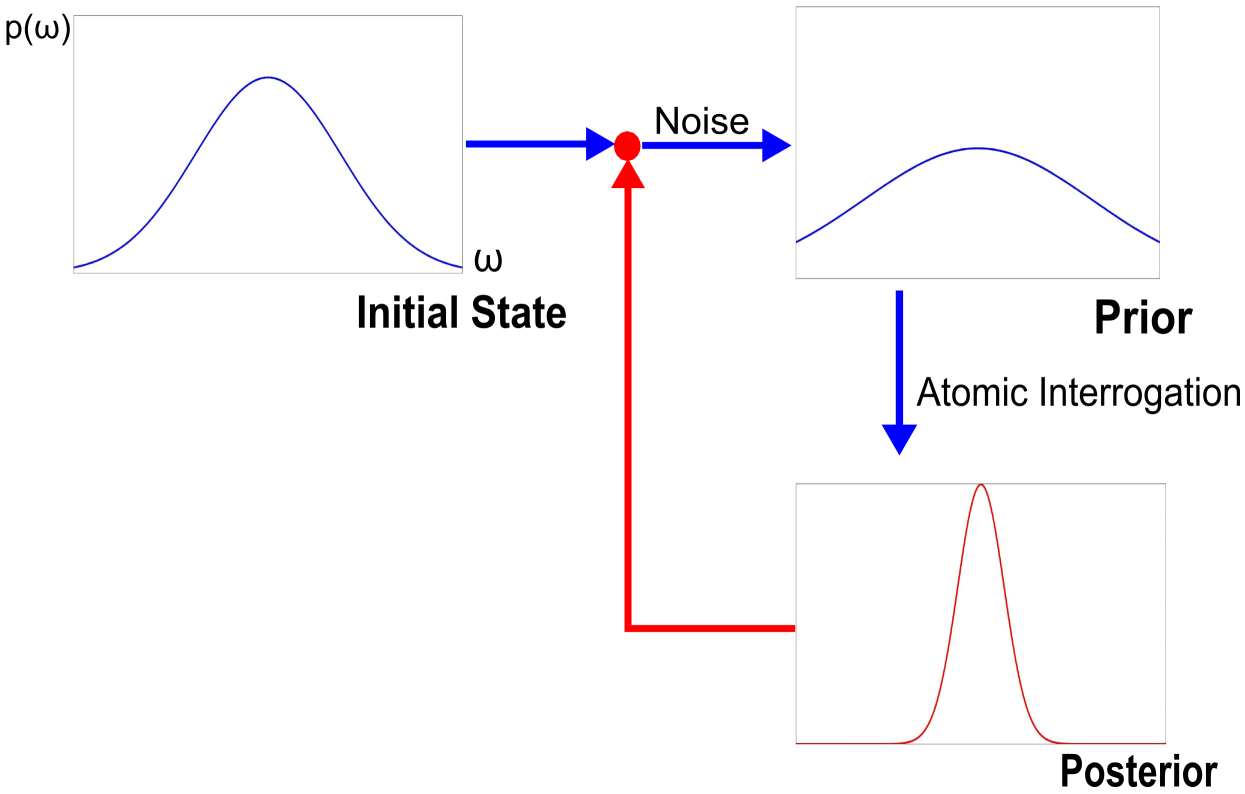}
\caption{Evolution of the classical oscillator's probability
  distribution in a clock protocol.  Noise decreases our knowledge of
  the classical oscillator's frequency, widening $p(\omega)$.  The
  prior probability distribution on the right is used in our
  optimization procedure, and describes the average frequency of the
  classical oscillator over the probe time $T$.  A measurement of the
  atomic standard then yields a measurement outcome $a$, which can be
  used to compute the posterior distribution $p(\omega|a)$. Over many
  iterations of this procedure, we gain knowledge of the history of
  the frequency differences.  This history can be integrated to
  estimate the total time difference between the classical oscillator
  and an ideal atomic clock.  }
\label{fig:flowchart}
\end{figure}

A more complete characterization of a clock involves estimating the
total time difference between the clock and an ideal clock based on
the atomic standard since the clock was started. The time difference
is measured in terms of the total phase difference and requires
integrating the frequency differences over time.  A full Bayesian
treatment of this characterization problem requires that we maintain
the complete history of the classical oscillator by way of the joint
probability distribution, $p( \omega_1, \omega_2 \hdots \omega_t )$,
where the marginals $p(\omega_i)$ reflect our knowledge of the
classical oscillator at a specific time in the past. However, in this
paper we use a greedy approach, namely our optimization procedure does not
take into account this history, and our prior knowledge 
consists only of the marginal $p(\omega_t)$. 
	 
Except for a brief discussion in Sect.~\ref{sect:sdp_noise}, we ignore
the effects of decoherence on the atomic standard and consider only random
fluctuations of the classical oscillator's frequency.  We 
focus on clocks with relatively few atoms and assume full quantum
control of the atoms.  These clocks are often referred to as quantum
clocks. An example is the highly accurate $\text{Al}^+$ quantum clock at NIST
\cite{alClock}, which is a candidate for the application of the
techniques developed here.
 
We consider a simplified model of an $N$-atom clock, whose state is a
superposition of the $N + 1$ symmetric Dicke states
$\ket{0},\ldots,\ket{N}$ of $N$ identical two-level systems. For
example, if $N=2$, the Dicke states are given by
$\ket{0}=\ket{00}_{AB}$,
$\ket{1}=(\ket{01}_{AB}+\ket{10}_{AB})/\sqrt{2}$, and
$\ket{2}=\ket{11}_{AB}$.  Ref.~\cite{buzek} shows that nothing can be
gained by considering other states of the two-level systems.  The $N$
atoms begin in the state $\ketbra{0}{0}$ and are then initialized
(step 1) by the application of a unitary operator $U(0)$ to the state
$\rho(0)=U(0)\ketbra{0}{0}U(0)^\dagger$. A query
(step~\ref{proc1:interrogation}) consists of the application of a
second unitary operator that depends on both $\omega$ and $\omega_0$;
normally, this dependence is only on the difference
$\omega-\omega_0$. Then
\begin{equation} 
\label{eq:oneQuery} 
\rho(0) \rightarrow \rho(0)^{\prime} =  \Omega(\omega-\omega_0) \rho(0)  \Omega(\omega-\omega_0)^{\dagger} .
\end{equation}	 
Finally, the system is measured (step~\ref{proc1:measurement}) with a
positive operator-valued measure (POVM) $\{P_a\}_a$.
	 	 
Interrogation is often done via the Ramsey technique~\cite{ramsey}.
Here, the atoms are subject to two pulses from a classical oscillator
of frequency $\omega$, separated by a period of free evolution of
length $T$.  These pulses are short enough for their dependence on
$\omega$ to be neglected, and so for our purposes, their effect can be
absorbed into the definitions of $U(0)$ and $\{P_a\}_a$.  The period
of free evolution is equivalent to a $z$ rotation by an angle of
$(\omega - \omega_0)T$. Thus, $\Omega(\omega-\omega_0)$ in
Eq.~(\ref{eq:oneQuery}) is given by
\begin{equation}
 \Omega(\omega-\omega_0) = e^{-i J_z (\omega_0- \omega) T}, 
\end{equation}
where $J_z$ is the total $z$ angular momentum operator,
$J_z\ket{k}=(k-N/2)\ket{k}$. By making a global phase change, we can
write the evolution of the Dicke states as
\begin{equation} 
\label{eq:stateEvolution} 
|k\rangle \rightarrow
e^{-ik(\omega-\omega_0)T}|k\rangle. 
\end{equation} 
Without loss of generality, for the remainder of this paper we assume
that $\omega_0 = 0$.

We refer to the action in Eq.~(\ref{eq:oneQuery}) as a ``clock query''.
We can generalize such an action by combining multiple
queries with interlaced unitary operators. The final clock state
can then be written as
\begin{eqnarray} 
\rho(t_f) & = & \Omega(\omega) U(t_f)\hdots \Omega(\omega)U(1) \notag\\
          & & {}\times\rho(0) U^{\dagger}(1) \Omega(\omega)^{\dagger} \hdots U(t_f)^{\dagger} \Omega(\omega)^{\dagger},
\label{eq:clockEvolution}  
\end{eqnarray}
provided that the relative frequencies do not drift between steps 1 and $t_f$. We refer to the complete action in Eq. (\ref{eq:clockEvolution}) as a ``clock interrogation".  As written, $U(t)$ acts only on the clock's atoms; below, we
consider $U(t)$ that can also act on arbitrary ancilla atoms.  Our goal is
to minimize the expected cost
\begin{eqnarray} 
\langle C \rangle 
  &=& \int \int C(\omega - f_a) p(a | \omega ) p(\omega) d\omega da \notag \\
  &=&  \int \int C(\omega - f_a) 
       \tr( P_a \rho_\omega(t_f) ) p(\omega) d\omega da.
\label{eq:clockCost} 
\end{eqnarray}
Here, $f_a$ is the classical frequency estimate for
measurement outcome $a$ with associated POVM operator $P_a$,
$p(\omega)$ is the prior probability distribution of the classical
oscillator's frequency, and $p(a|\omega)$ is the probability of
measurement outcome $a$ given $\omega$ and the interrogation protocol.
The subscript $\omega$ of $\rho$ indicates that the clock was
interrogated by a classical oscillator at frequency $\omega$.  If
$C(\omega - f_a) = (\omega - f_a)^2$, then a minimization of $\langle
C \rangle$ is equivalent to a minimization of the expected posterior
variance of the relative phase change $\omega T$.  To minimize
$\langle C \rangle$ we vary the $U(t)$ in
Eq.~(\ref{eq:clockEvolution}) and the final POVM $\{P_a\}_a$.  The
operator $\Omega(\omega)$ is considered fixed.  

In principle one can consider simultaneously optimizing multiple
sequential interrogations with varying probe times $T$.  Because a
fixed probe time is unable to distinguish between frequencies
differing by multiples of $2\pi/T$, varying the probe time is
necessary to avoid undetected frequency hops. Here, we consider only
one interrogation at a time and fix the probe time $T$. In this case,
there is a scaling symmetry $\omega \rightarrow \alpha\omega$ and
$T\rightarrow \omega/\alpha$, so we fix $T=1$ from now on.
	
\subsection{Background and Summary}

A great deal of research has been done on the theory of atomic clock
optimization.  Of particular interest has been the question of how
quantum effects such as entanglement and squeezing can help overcome
the atomic shot-noise precision limit of $O(1/\sqrt{N})$ and approach
the Heisenberg limit of $O(1/N)$ for $N$ atoms. The possibility that
these effects can result in improved precision was raised
in~\cite{wineland:qc1994a,bollinger:qc1996a}. These and related ideas
are now at the foundations of the subject of quantum
metrology~\cite{giovannetti:qc2011a}. Our work is based on and extends
the analytical studies of Bu\v zek \emph{et al.}~\cite{buzek}, who sought
to optimize clock interrogations under the assumption of a uniform
prior and a family of periodic cost functions.  Starting from results
of Holevo~\cite{holevo:qc1982a}, they obtained a family of initial
states that perform well for large numbers of atoms.  Recently,
Demkovicz-Dobrazanski~\cite{prior} optimized interrogations with costs
determined by the periodic function $C(\omega - f_a) =
4\sin^2\frac{(\omega-f_a)}{2}$ for arbitrary continuous priors. While
this approach is largely analytical, its implementation requires the
numerical maximization of a trace norm. These works focus on
optimizing a single interrogation. Long-term stability has been
analyzed for entangled states of a number of atoms in
Ref.~\cite{winelandBible}, and for a family of squeezing protocols of
atomic ensembles in Ref.~\cite{squeezedClocks}. These studies account
for phase noise in the classical oscillator and optimize clock
protocols given specific feedback mechanisms and noise models.

The periodic cost functions studied in~\cite{buzek,prior} are
convenient for analytic studies of clock optimization but do not
penalize phase errors greater than $2\pi$, even though they correspond
to frequency estimates far from the true frequency of the
oscillator. This issue becomes important when multiple interrogations
and long-term clock stability are considered. Given that we do not
explicitly consider either, and that in the model described above,
phases differing by multiples of $2\pi$ cannot be distinguished, this
may seem irrelevant. Specifically, when optimizing interrogations
consisting of one query or multiple queries with identical probe times
as we do here, the prior and the cost function can in principle be
folded into the interval $[-\pi,\pi]$. However, this folding results
in a cost function that depends on the prior. Because of this and in
view of future extensions of this work, we consider non periodic cost
functions, particularly the quadratic one.

The goal here is to apply the semidefinite programming strategy
originally developed for quantum query algorithms~\cite{BSS,barnumSDP}
to the problem of optimizing clock interrogations. This enables the
greatest flexibility in searching for solutions, as both the prior and
the cost function can be arbitrary.  The first major obstacle is that
the unknown parameter of the queries is continuous rather than
discrete and finite as in typical quantum query algorithms.  We
overcome this by showing how to systematically discretize the
parameter spaces while having good control of the discretization
errors.  The second major obstacle is that the size of the
semidefinite program (SDP) grows at a rate of $\Theta\left(t_f |O|^2 (
|Q| + |F| ) \right)$, where $|O|$ is the number of points in our
discretization, $|Q|$ is the number of atoms we are simulating, and
$|F|$ is the number of POVM elements.  This limits the number of atoms
and the level of discretization for which general clock interrogations
can be optimized, depending on available computational resources.  We
show that for small but useful numbers of atoms, the SDPs can be
implemented and solved given current resources.

The semidefinite programming strategy is formulated in full generality
in Sec.~\ref{sec:sdp}. Our version does not restrict the set of
possible queries or the prior over query parameters. We show how to
explicitly reconstruct the algorithm and measurements from the
solution and discuss how to modify the strategy to account for query
noise. This may be of independent interest for interpolating between
quantum and classical query complexity.  In Secs.~\ref{sec:clocksdp}
and \ref{sec:dError}, we specialize the SDPs to the case of quantum
clocks. Here we show how to discretize the parameter spaces to obtain
finite SDPs while bounding discretization errors. In
Sec.~\ref{sec:results} we show the results from applying the
discretized SDPs to concrete clock problems. In particular we compare
the computational results obtained to prior work, demonstrating both
the ability to obtain improvements and to determine bounds on optimal
costs.\\

\section{The Semidefinite Programming Formulation of Quantum Query Complexity}
\label{sec:sdp}

\subsection{Constructing the Semidefinite Program}
\label{subsec:sdp}

The operation of a quantum clock can be expressed naturally in the
query model of quantum computation.  Here, we are given an oracle (or
black box) chosen from some finite set $\{\Omega(1),
\Omega(2)... \Omega(M)\}$. Each oracle $\Omega(x)$ is a unitary
operator, selected with probability $p(x)$.  The goal is to determine
$x$ with queries, that is, with applications of $\Omega(x)$ to quantum
states and measurement. Often in this context, one is interested in
minimizing the number of queries needed to learn $x$ with near
certainty.  Here, however, we fix the number of queries and seek to
minimize the expected difference between an estimate $x^{\prime}$ of
$x$ and $x$ as quantified by a cost function $C(x,x^{\prime})$.

We adapt the semidefinite programming formulation of quantum query
complexity developed by Barnum \emph{et ~al.} \cite{BSS,barnumSDP} in the
context of proving quantum lower bounds. They cast the problem of
determining the number of queries required in terms of a test of
feasibility of a semidefinite program. Here we aim to minimize
an expected cost, which requires optimizing an objective function
with an extended SDP. For a general introduction to semidefinite programming see Ref. \cite{SDP}.

To formulate the SDP, we introduce quantum systems $O$ and $Q$ for the
oracle and the querier, respectively.  System $Q$ is the one on which
the oracle operators $\Omega(x)$ act. An additional system of ancillas
$A$ may be used by the querier; for the purposes of the SDP, we
normally trace out $A$.  We use the convention that the systems on
which an operator acts are denoted by superscripts.  For states
(density operators) $\rho$, partially omitted system superscripts
imply the partial trace over the omitted systems. Superscripts may be
completely omitted if the set of systems being acted on is clear or
irrelevant.  The dimension of the state space of system $S$ is denoted
by $|S|$.  Initially, the state of the oracle system is given by
$\rho_0 = \sum_x \sum_y \sqrt{p(x)} \sqrt{p(y)} \ketbra{x}{y}$. By
representing the oracle probabilities with a pure superposition, we
encode the fact that there is no information about $x$ available to
the querier (or any system other than $O$).  We define the joint
operator $\Omega^{OQ} = \sum_x \ketbra{x}{x} \Omega^{Q}(x)$, which
applies $\Omega(x)$ to system $Q$ conditional on the state $\ket{x}$
of the oracle.  In this setting, a multi-query quantum computation is
given by the composition
\begin{widetext}
\begin{equation} \label{eq:computation} \rho^{OQA}(t_f) =  \Omega^{OQ} U^{QA}(t_f)  \hdots \Omega^{OQ} U^{QA}(1)  \rho^{OQA}(0) U^{\dagger QA} (1)  \Omega^{\dagger OQ} \hdots U^{\dagger QA}(t_f) \Omega^{\dagger OQ} \end{equation}  
\end{widetext}
followed by a measurement of $QA$ with a POVM. Here $\rho^{OQA}(0)$
is the initial state, with $\rho^O(0) = \rho_0$ pure as defined above, and
$\rho^{QA}(0)$ is a state chosen by the querier. The unitary operators
$U^{QA}(t)$ are inter-query operators that can be chosen arbitrarily
for each step $t$. The querier's protocol (or algorithm) is determined by the
initial state $\rho^{QA}(0)$, the $U^{QA}(t)$, and the POVM.
Refs. \cite{BSS,barnumSDP} show that there is a correspondence between
these algorithms and solutions to a set of semidefinite
constraints. We split the constraints into those that correspond to
the initial state and choice of inter-query operators, and those that
correspond to the POVM. The first set, $S_E$, consists of 
\begin{enumerate}
\item $\rho^O(0) = \rho_0 $, 
\item[] and the following for $t \in\{0,1,\ldots, t_f\}$:
\item  $\rho^{OQ}(t) \geq 0$,
\item  $\rho^O(t) = tr_Q( \rho^{OQ}(t) )  $ and
\item  $\rho^O(t) = tr_Q ( \Omega^{\dagger OQ} \rho^{OQ}(t-1) \Omega^{OQ} ) $.
\end{enumerate}

The constraints corresponding to the POVM and the objective to be
optimized can be based on a connection to the concept of remote state
preparation as described next.
  
\subsection{Measurement and Remote State Preparation}
	
Remote state preparation~\cite{GHJW} involves two systems $O$ and $Q$.
Given a joint state of the two systems, we can prepare states of $O$
conditional on measurements of $Q$.

\begin{definition} \label{def:RSP}
We say that we can \emph{remotely prepare} $\{\sigma_a^O\}_a$ from
the state $\rho^{OQ}$ on $O$ and $Q$ if there exists a POVM of $Q$
with operators $\{P_a^Q\}_a$ such that $tr_Q(P_a^Q \rho^{QO}) =
\sigma_a^O$.
\end{definition} 

The members $\sigma_a^O$ of the set that can be remotely prepared are
positive operators with trace $0\leq \tr(\sigma_a^O)\leq 1$.  The
definition implies that the state $\sigma_a^O/\tr(\sigma_a^O)$ can be
conditionally prepared with probability $\tr(\sigma_a^O)$ by means of
a fixed POVM on $Q$. For definiteness, consider a POVM
$\{P_a^Q\}$. Any implementation of the POVM has the desired
effect. Such an implementation is described by a quantum operation with
Kraus operators $E_a$ such that $E_a^\dagger E_a = P_a$.  For outcome
$a$, the unnormalized $OQ$ state is $E_a^Q\rho^{OQ}E_a^{\dagger
  Q}$. Cyclicity of partial trace implies that $\tr_Q
E_a^Q\rho^{OQ}E_a^{\dagger Q} = \tr_Q P_a^Q\rho^{OQ} =
\sigma_a^O$. The probability of the outcome is $\tr(\sigma_a^O)$.

Theorem \ref{the:RSP} characterizes the set of states that can be
remotely prepared according to Def.~\ref{def:RSP} from a pure state.

\begin{theorem} \label{the:RSP} 
Let $\rho^{OQ}$ be a pure state of systems $O$ and $Q$. We can
remotely prepare $\{\sigma_a^O\}_a$ if and only if $\sum_a \sigma^O_a
= \rho^O$.
\end{theorem}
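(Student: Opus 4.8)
The statement is an instance of the Hughston--Jozsa--Wootters characterization of the ensembles to which a bipartite pure state can be steered, so the plan is to prove both directions directly from Def.~\ref{def:RSP}. The \emph{only if} direction is the easy one and uses nothing about purity. If $\{\sigma_a^O\}_a$ is remotely preparable, then by definition there is a POVM $\{P_a^Q\}_a$ with $\tr_Q(P_a^Q\rho^{OQ})=\sigma_a^O$ and, crucially, $\sum_a P_a^Q=\one^Q$. Summing the defining equations and using linearity of the partial trace gives $\sum_a\sigma_a^O=\tr_Q\big((\sum_a P_a^Q)\rho^{OQ}\big)=\tr_Q(\rho^{OQ})=\rho^O$, which is exactly the claimed condition. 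I would note in passing that this direction also reproduces the observations after Def.~\ref{def:RSP} that each $\sigma_a^O\ge0$ with $\tr(\sigma_a^O)\le1$.

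For the \emph{if} direction I would construct the POVM explicitly from a Schmidt decomposition $\ket{\psi}=\sum_i\sqrt{\lambda_i}\ket{i}^O\ket{i}^Q$ of the pure state $\rho^{OQ}=\ketbra{\psi}{\psi}$, taking $\lambda_i>0$ on the support and identifying the two Schmidt bases. A direct computation of $\tr_Q(P^Q\rho^{OQ})$ shows that its matrix element $\bra{k}\tr_Q(P^Q\rho^{OQ})\ket{l}$ equals $\sqrt{\lambda_k\lambda_l}\,\bra{l}P^Q\ket{k}$, so to hit a prescribed $\sigma_a^O$ it suffices to impose $\bra{l}P_a^Q\ket{k}=(\sigma_a^O)_{kl}/\sqrt{\lambda_k\lambda_l}$ on the support. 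Invariantly this reads $P_a^Q=(\rho^Q)^{-1/2}\,\vartheta(\sigma_a^O)\,(\rho^Q)^{-1/2}$, where $(\rho^Q)^{-1/2}$ is the inverse square root on the support and $\vartheta$ denotes transposition in the Schmidt basis. Positivity of $P_a^Q$ then follows because $\vartheta(\sigma_a^O)=\overline{\sigma_a^O}\ge0$ (the transpose of a positive Hermitian operator is its complex conjugate, with the same nonnegative spectrum), and congruence by the Hermitian $(\rho^Q)^{-1/2}$ preserves positivity.

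It remains to check that the $P_a^Q$ form a genuine POVM. On the support, $\sum_a P_a^Q=(\rho^Q)^{-1/2}\,\vartheta(\sum_a\sigma_a^O)\,(\rho^Q)^{-1/2}=(\rho^Q)^{-1/2}\,\vartheta(\rho^O)\,(\rho^Q)^{-1/2}=\Pi^Q$, the projector onto the Schmidt support, using the hypothesis $\sum_a\sigma_a^O=\rho^O$ and that $\rho^O$ and $\rho^Q$ share the Schmidt spectrum. To obtain completeness on all of $Q$, I would absorb the residual $\one^Q-\Pi^Q$ into a single outcome; this changes none of the $\sigma_a^O$ because $(\one^Q-\Pi^Q)\rho^{OQ}=0$, the pure state being supported on $\Pi^Q$. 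I expect the main obstacle to be precisely this last, bookkeeping-heavy assembly together with the positivity argument: one must take pseudoinverses on the support of a possibly rank-deficient $\rho^Q$, confirm positivity through the transpose, and verify that completing the POVM outside the support leaves all prepared operators untouched.
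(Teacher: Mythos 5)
Your proof is correct, and the \emph{only if} direction is identical to the paper's. For the \emph{if} direction you take a genuinely different route to what turns out to be the same POVM. The paper decomposes each $\sigma_a^O$ into pure states, invokes the unitary freedom of ensemble decompositions of $\rho^O$ relative to the Schmidt ensemble, and sets $P_a^Q=\sum_m\sum_{j,j'}u_{am,j}^*u_{am,j'}\ketbra{\varphi_j}{\varphi_{j'}}$; there positivity is free (each $P_a^Q$ is manifestly a sum of rank-one positive operators) and completeness follows from unitarity of $(u_{am,j})$. You instead write the closed form $P_a^Q=(\rho^Q)^{-1/2}\,\vartheta(\sigma_a^O)\,(\rho^Q)^{-1/2}$ and verify positivity (transpose of a positive operator is its conjugate, congruence preserves positivity) and completeness (directly from $\sum_a\sigma_a^O=\rho^O$) by hand. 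Substituting $u_{am,j}=\sqrt{p_{am}}\braket{\phi_j}{\psi_{am}}/\sqrt{q_j}$ into the paper's formula shows the two constructions coincide on the support, so the difference is purely one of derivation. Your version buys a coordinate-free formula and avoids the ``pad the ensembles with zero-probability terms'' bookkeeping, at the price of having to manage the support explicitly: you correctly note that $\sigma_a^O\ge 0$ together with $\sum_a\sigma_a^O=\rho^O$ confines each $\sigma_a^O$ to the support of $\rho^O$ (so the pseudoinverse expression is well defined), and that absorbing $\one^Q-\Pi^Q$ into one outcome is harmless since $(\one^Q-\Pi^Q)\rho^{OQ}=0$. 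The one step I would make fully explicit is the first of these two support observations, since the whole formula is meaningless without it; otherwise the argument is complete.
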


The proof of Thm.~\ref{the:RSP} is given in App.~\ref{app:a}.

In order to apply Thm.~\ref{the:RSP} we formulate the expected cost in
terms of the POVM-conditional unnormalized states $\sigma_a^O$.  In
the clock problem, the querier associates an estimated frequency $f_a$
with each measurement outcome $a$.  Assume for simplicity that the
possible classical oscillator frequencies $\omega$ come from a finite
set.  We can define operators $A_a^O$ in the oracle basis
$\{\ket{\omega}\}_\omega$ by $(A_a)_{\omega,\omega'} =
\delta_{\omega,\omega'} C(\omega-f_a)$. Eq.~(\ref{eq:clockCost}) can
be rewritten as
\begin{equation}
\langle C\rangle = \sum_a \tr(A_a\sigma_a).
\label{eq:genCost}
\end{equation}
With this equation as motivation, we consider the general situation
where the expected cost of a querier protocol is computed from
predetermined cost operators $A_a$ according to the POVM-conditional
oracle states as in Eq.~(\ref{eq:genCost}).  This motivates the
following SDP $S_M(\rho^O)$ for a given oracle state $\rho^O$:
\begin{equation}
\text{Minimize $\tr(\sum_a \sigma_a A_a)$ subject to:} 
 \begin{cases}
 \forall a \sigma_a \geq 0, \\
 \sum_a \sigma_a = \rho^O .
 \end{cases}
\end{equation}

\begin{theorem} \label{the:lowerBound}
Suppose that $\rho^{OQA}$ is pure. Then $S_M(\rho^O)$ computes the minimum
expected cost over measurements on $QA$.
\end{theorem}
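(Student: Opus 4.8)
The plan is to reduce the statement to Thm.~\ref{the:RSP} by treating the querier's entire system $QA$ as the single system to be measured, and to match the remote-preparation characterization to the constraints of $S_M(\rho^O)$. First I would fix an arbitrary POVM $\{P_a^{QA}\}_a$ on $QA$ and, following the remote-state-preparation discussion, set $\sigma_a^O = \tr_{QA}(P_a^{QA}\rho^{OQA})$. By Eq.~(\ref{eq:genCost}) the expected cost of this measurement is exactly $\sum_a\tr(A_a\sigma_a)$, so it suffices to show that the families $\{\sigma_a\}_a$ arising from measurements on $QA$ coincide with the feasible set of $S_M(\rho^O)$.

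For the lower-bound (soundness) direction I would check that any POVM produces a feasible point. Each $\sigma_a$ is positive because $P_a^{QA}\geq 0$ and $\rho^{OQA}\geq 0$, and $\sum_a\sigma_a = \tr_{QA}\bigl((\sum_a P_a^{QA})\rho^{OQA}\bigr) = \tr_{QA}(\rho^{OQA}) = \rho^O$, since the POVM elements sum to the identity. Hence every measurement on $QA$ yields $\{\sigma_a\}_a$ satisfying both SDP constraints, so its expected cost is at least the optimal value of $S_M(\rho^O)$.

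For the achievability (completeness) direction I would invoke Thm.~\ref{the:RSP}, which is where the purity hypothesis enters. Regarding $QA$ as the system $Q$ of that theorem and using that $\rho^{OQA}$ is pure, any family $\{\sigma_a\}_a$ with $\sigma_a\geq 0$ and $\sum_a\sigma_a=\rho^O$ can be remotely prepared; that is, there exists a POVM on $QA$ realizing these conditional states. Applied to an optimal feasible point of $S_M(\rho^O)$, this exhibits a measurement whose expected cost equals the SDP value. Combining the two directions shows that the minimum expected cost over measurements on $QA$ equals the optimal value of $S_M(\rho^O)$.

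The only genuine content beyond this bookkeeping lives in Thm.~\ref{the:RSP}. The main point to be careful about is that its statement is being applied with the composite system $QA$ playing the role of $Q$, and that the SDP's positivity and summation constraints are \emph{precisely} the image of the remote-preparation characterization, so that the feasible set and the set of measurement-induced families $\{\sigma_a\}_a$ are identical rather than one merely containing the other; without the purity of $\rho^{OQA}$ the achievability direction could fail, and the SDP would give only a lower bound.
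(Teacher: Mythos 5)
Your proposal is correct and follows essentially the same route as the paper: the paper's proof is a one-line appeal to Thm.~\ref{the:RSP}, observing that the feasible sets $\{\sigma_a\}_a$ of $S_M(\rho^O)$ are exactly the families remotely preparable from the pure state $\rho^{OQA}$ via measurements on $QA$. You have simply unpacked the two directions (every POVM gives a feasible point; every feasible point is realized by a POVM, which is where purity is used) that the paper leaves implicit.
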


\begin{proof} 
It suffices to observe that according to Thm~\ref{the:RSP},
sets $\{\sigma_a\}_a$ satisfying the constraints of $S_M$ are precisely
the sets that can be remotely prepared with access to systems $QA$.
\end{proof}

The complete SDP for the query optimization problem is obtained by
combining $S_Q=S_E\cup S_M(\rho^O(t_f))$. Because query algorithms
have access to the non-$O$ systems of a purification of $\rho^O(t_f)$,
$S_Q$ computes the optimal average cost of a $t_f$-query quantum
algorithm.  The SDP considered in the spectral adversary
method~\cite{BSS,barnumSDP} is a relaxation of $S_Q$ with a modified
objective designed to determine a lower bound on the number of queries
needed to obtain some fixed probability of error.  There has been significant
recent progress in refining this relaxation~\cite{negAdv} and
demonstrating that it is nearly exact~\cite{rAdv}.  Note that in our
case, we can construct cost operators $A_a$ such that $S_Q$ minimizes
the probability of error for fixed $t_f$. 

\subsection{Algorithm and POVM reconstruction}

For any solution of $S_Q$, in particular the optimal one, there is an
explicit query algorithm that achieves the associated cost objective.
In particular, given the sequence of density matrices $\rho^{OQ}(t)$
and the conditional operators $\sigma_a$, it must be possible to infer
the sequence of unitaries in Eq.~(\ref{eq:computation}) and the POVM
achieving the expected cost. The POVM yielding the conditional
operators $\sigma_a$ can be constructed as in the proof of
Thm.~\ref{the:RSP}.  To determine the unitaries $U^{QA}(t)$, first
extend the querier system with an ancilla $A$, where $A$ has dimension
$|O||Q|$. For each $t$, construct pure states $\rho'^{OQA}(t)$ and
$\rho^{OQA}(t+1)$ by purifying $\rho'^{OQ}(t)=\Omega^{\dagger OQ}
\rho^{OQ}(t) \Omega^{OQ}$ and $\rho^{OQ}(t+1)$, respectively. By
construction, $\rho'^{O}(t)=\rho^{O}(t+1)$. One can therefore use the
Schmidt forms for the pure states $\rho'^{OQA}(t)$ and
$\rho^{OQA}(t+1)$ to construct unitaries $U^{QA}(t)$ satisfying
\begin{equation}
 \rho^{OQA}(t+1) = U^{QA}(t) \rho^{'OQA}(t) U^{\dagger QA}(t). 
\end{equation}

\subsection{Noise}
\label{sect:sdp_noise}

The SDP $S_Q$ is based on the assumption that there is no noise in the
query process. In particular, this excludes decoherence during clock
queries. It is possible to adapt $S_Q$ to include the effects of
noise. This requires that we extend the states $\rho^{OQ}$ by
querier-inaccessible systems $E_i$ modeling the environments causing
the noise. The net effect of query and noise can be modeled
by an oracle-conditional isometry defined by 
\begin{equation}
\ketbra{x}{y}\rho^{Q}\rightarrow\ketbra{x}{y}
  \Omega(x)^{QE}\rho^{Q}\ketbra{\epsilon}{\epsilon}\Omega(y)^{\dagger QE},
\end{equation}
where $\ket{x},\ket{y}$ are the standard oracle basis states,
$\ket{\epsilon}$ is a fixed initial state of $E$ and $\Omega(x)^{QE}$
is unitary. Define $\Omega^{OQE}=\sum_x\ketbra{x}{x}\Omega(x)^{QE}$.  In
the absence of noise, $\Omega(x)^{QE}=\bra{x}\Omega^{OQ}\ket{x}$.
In many cases one can decompose
$\Omega(x)^{QE}=D^{QE}\bra{x}\Omega^{OQ} \otimes I^E \ket{x}$ for an
$O$-independent unitary $D^{QE}$. An example is the clock query
in the presence of phase decoherence.  In a sequence of queries, a new
version of $E$, $E_i$ is introduced at each step by the isometry.
Let $E^t=E_1E_2\ldots E_t$. To account for the noisy
query, the SDP $S_Q$ is modified to $S_D$ as follows:
\begin{align*} \text{Min}&\text{imize\ } \tr(\sum_a \sigma_a^{OE^{t_f}} A_a^{O}) \text{\ subject to} \\
 & \sigma_a^{OE^{t_f}} \geq 0 \\ 
 & \sum_b \sigma_b^{OE^{t_f}} =
  \rho^{OE^{t_f}}(t_f) \\ 
 & \rho^{OE^{t}Q}(t) \geq 0 \\ 
 & \rho^{OE{^t}}(t) = tr_Q(\rho^{OE^{t}Q}(t) ) \\ 
 & \rho^{OE^{t}}(t) = tr_Q \left( \Omega^{OQE_t}
  \rho^{OE^{t-1}Q}(t-1) \ketbra{\epsilon}{\epsilon}
  \Omega^{\dagger OQE_t} \right) \\ 
 & \rho^{OE^{0}}(0) =  \rho_0.
\end{align*} 
(There is an implicit ``for all'' over the free variables $a$ and $t$.)

Phase decoherence is a particularly interesting example of this more
general SDP.  As mentioned above, for complete phase decoherence, the
query isometry factors, in this case giving
\begin{equation}
D^{QE}\ket{i}^Q\ket{\epsilon}^E = 
\ket{i}^Q\ket{\epsilon_i}^E,
\end{equation}
where the $\ket{\epsilon_i}$ are orthonormal states. The effect is to
perfectly correlate the environment at each step with the standard
query basis. In the context of standard query algorithms with a cost
function that captures the probability of successfully identifying the
oracle, the optimal solution to the SDP with complete phase
decoherence corresponds to an optimal classical query algorithm for
the given number of queries $t_f$. By modifying $D^{QE}$ to model
incomplete phase decoherence, it is possible to interpolate between
classical and quantum query algorithms, albeit at the large cost of
adding the systems $E_i$. 

Note that it is not possible to simply trace out the $E_i$ in the SDP:
As can be seen from the method of reconstructing the algorithm from a
solution of the SDP, this would be equivalent to giving the querier
access to the $E_i$. That is, the querier has implicit access to
anything that gets traced out, since traced out systems are not
constrained by the SDP. When the noisy query factors, this is
equivalent to not having had any noise at all, because the querier can
just undo the noise isometries.

\section{The Clock SDP}
\label{sec:clocksdp}

One can apply the SDP $S_Q$ to the clock problem, but the result is
not finitely implementable because the oracle system $O$ is continuous
and the cost operators are continuously indexed.  To make the SDP
finite, we discretize both the possible oracle frequencies and the
frequency estimates that determine the cost operators.  In particular,
we constrain $\omega\in\{\omega_1,\ldots,\omega_{d}\}$, and restrict
the measurement outcomes to a finite set $a\in\{1,\ldots,m\}$
associated with the set of frequency estimates $F=\{f_a\}_{a=1,\ldots,m}$. The
frequency estimates need not be among the $\omega_i$.  The query
system $Q$'s Hilbert space is spanned by the Dicke states
$\ket{k}$. The oracle initial state is given by
\begin{equation}
\rho^O(0)=\sum_{x,y}\sqrt{p(\omega_x)p(\omega_y)}\ketbra{x}{y},
\end{equation}
where $p(\omega_x)$ is the prior probability of $\omega_x$, a
discretization of the continuous prior, and $\ket{x}$ is the oracle
basis state corresponding to classical oscillator frequency
$\omega_x$.  The operator $\Omega^{OQ}$ is now given by
\begin{equation}
\Omega^{OQ} = \sum_{x}\ketbra{x}{x}\exp(-iJ_z^Q\omega_x),
\end{equation}
up to irrelevant $x$-dependent phases. As noted before Eq.~\eqref{eq:genCost},
the cost operators are given by
\begin{equation}
(A_a)_{x,y} = C(\omega_x - f_a) \delta(x,y).
\end{equation}
Since the SDP depends on $p$ and $F$, we denote it by
$S_C(p,F)$. We also use $S_C(p,F)$ to denote the optimum achievable
cost given $p$ and $F$. For discretized prior and frequency estimates,
this is the cost computed by the SDP. 

As written, the size of the SDP $S_Q$ is $\Theta\left(t_f |O|^2 (
|Q|^2 + |F| ) \right)$, where $F=\{f_a\}$. Our implementation of the
SDP takes advantage of the fact that for the clock problem, $\Omega$
is diagonal in the Dicke basis, thus reducing the size and therefore the
memory and time resources required to solve the SDP. In particular,
the set of solutions is invariant under the transformation
$\rho^{OQ}(t)\rightarrow U(t)^Q\rho^{OQ}(t)U(t)^{\dagger Q}$ for
operators $U(t)$ diagonal in the Dicke basis. It follows that we can
restrict the SDP by assuming $\rho^Q(t)$ is diagonal.  The restricted
SDP's size is $\Theta\left(t_f |O|^2 ( |Q| + |F| ) \right)$.

\section{Discretization Error}
\label{sec:dError}

When solving the SDP, it is necessary to make an estimate of the
difference between the discretized SDP's optimal cost and that of the
infimum of the costs of solutions to the continuous problem.  We
separately bound the error due to discretizing the prior (\emph{oracle
  discretization}) and that due to discretizing the measurement
outcomes (\emph{querier discretization}).  To deal with the oracle
discretization error, we solve clock SDPs with random oracle
discretizations so that the average cost for different discretizations
is a lower bound on the optimum cost of the continuous problem. An
upper bound is obtained by a cost integral applied to any of the SDPs
solved. For the quadratic cost function, the querier discretization
error can be bounded in a way that depends only on the set
$F=\{f_a\}_a$ and can be made to go to zero by increasing the size and
resolution of $F$. We start with querier discretization.

\subsection{ Querier Discretization Error }
\label{sect:dErrorB}

To guarantee that we determine the optimum cost for a frequency prior
$p$, the set of allowed querier frequency estimates $F$ should be all of
$\rls$. To solve the clock SDP numerically, we restrict $F$ to
a finite subset. By definition, $S_C(p,F)\geq S_C(p,\rls)$.  The goal
of this section is to obtain a lower bound on $S_C(p,\rls)$ depending
on $F$, and to explain an iterative strategy for locally optimizing
$F$.  Let $F=\{f_1,\ldots,f_N\}$ with $f_j<f_{j+1}$.

\begin{theorem}
\label{thm:qBound}
Let $C$ be a second differentiable, non-negative function satisfying
$C''(x)\leq b$, $C(0)=0$ and $C$ is monotone on $[0,\infty)$
and $(-\infty,0]$. Define $M(\omega)$ by
\begin{equation}
M(\omega) = \begin{cases} C(\omega-f_1 ) & \text{for\ } \omega
  \leq f_1 \\ C(\omega - f_N) & \text{for\ } \omega \geq
  f_N \\ 0 & \text{otherwise} \end{cases}
\end{equation}
We have the following inequality:
\begin{eqnarray}
S_C(p,F)-S_C(p,\rls) &\leq&
  \max_j \frac{b}{8}(f_{j+1}-f_{j})^2 \notag\\
  && {}+ 
  \int M(\omega) p(\omega)d\omega.
  \label{eq:qBound}
\end{eqnarray}
\end{theorem}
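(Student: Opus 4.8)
The plan is to exploit the fact that $S_C(p,F)$ and $S_C(p,\rls)$ are built from the \emph{same} evolution constraints $S_E$ and differ only in the set of allowed frequency estimates entering $S_M$. Fix any feasible final oracle state $\rho^O=\rho^O(t_f)$ (feasibility is governed by $S_E$, which does not see $F$), and let $V_G(\rho^O)$ denote the minimum of $\tr(\sum_a\sigma_aA_a)$ over remotely preparable $\{\sigma_a\}$ with estimates drawn from a set $G$. By Thm.~\ref{the:lowerBound} and Thm.~\ref{the:RSP} these are exactly the positive tuples with $\sum_a\sigma_a=\rho^O$, so $S_C(p,F)=\inf_{\rho^O}V_F(\rho^O)$ and $S_C(p,\rls)=\inf_{\rho^O}V_{\rls}(\rho^O)$ over the common feasible set. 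Hence it suffices to prove the uniform bound $V_F(\rho^O)\le V_{\rls}(\rho^O)+\frac b8\max_j(f_{j+1}-f_j)^2+\int M\,p\,d\omega$ for every feasible $\rho^O$ and then take the infimum. For each outcome $a$ write $\mu_a$ for the diagonal $(\sigma_a)_{\omega,\omega}$, so that $\tr(\sigma_aA_a)=g_a(f_a):=\int C(\omega-f_a)\,d\mu_a(\omega)$, with $\sum_a\mu_a=p$ and $\sum_a\mu_a(\rls)=\tr\rho^O=1$.

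Now take a near-optimal decomposition $\{\sigma_a\}$ for $V_{\rls}(\rho^O)$ with (near-)optimal real estimates $f_a=\arg\min_f g_a(f)$. I build a feasible $F$-solution by \emph{merging} outcomes according to the nearest grid point: let $\hat f_a\in F$ be the closest element of $F$ to $f_a$ and set $\sigma'_j=\sum_{a:\hat f_a=f_j}\sigma_a$. Each $\sigma'_j\ge0$ and $\sum_j\sigma'_j=\sum_a\sigma_a=\rho^O$, so $\{\sigma'_j\}$ is remotely preparable and uses only estimates in $F$; its cost is $\sum_a g_a(\hat f_a)$. Therefore $V_F(\rho^O)\le V_{\rls}(\rho^O)+\Delta$ with $\Delta=\sum_a\big(g_a(\hat f_a)-g_a(f_a)\big)$, and the problem reduces to bounding $\Delta$.

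The bound on $\Delta$ splits into two regimes. When $f_a\in[f_1,f_N]$, the nearest grid point satisfies $|\hat f_a-f_a|\le\frac12\max_j(f_{j+1}-f_j)$; since $f_a$ is an interior minimizer, $g_a'(f_a)=0$, and Taylor's theorem with $g_a''(\xi)=\int C''(\omega-\xi)\,d\mu_a\le b\,\mu_a(\rls)$ gives $g_a(\hat f_a)-g_a(f_a)\le\frac b8\mu_a(\rls)\max_j(f_{j+1}-f_j)^2$. Summing over these outcomes and using $\sum_a\mu_a(\rls)=1$ yields the first term of \eqref{eq:qBound}. The remaining outcomes, with $f_a>f_N$ (resp.\ $f_a<f_1$), are the delicate ones: here $\hat f_a=f_N$ (resp.\ $f_1$) is far from $f_a$, and one cannot simply split $\sigma_a$ by frequency because coherences forbid it. The resolution is the \emph{pointwise} inequality $C(\omega-f_N)-C(\omega-f)\le M(\omega)$, valid for all $f>f_N$ and all $\omega$, which follows from the two-sided monotonicity of $C$ about its minimum at $0$ (for $\omega\ge f_N$ one drops the nonnegative subtrahend and uses $M(\omega)=C(\omega-f_N)$; for $\omega<f_N<f$ monotonicity makes the difference nonpositive while $M(\omega)\ge0$). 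Integrating against $\mu_a$ gives $g_a(\hat f_a)-g_a(f_a)\le\int M\,d\mu_a$, and the symmetric statement handles $f_a<f_1$; summing over all outcomes and using $\sum_a\mu_a=p$ bounds this contribution by $\int M(\omega)p(\omega)\,d\omega$, the second term of \eqref{eq:qBound}.

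Combining the two regimes gives $\Delta\le\frac b8\max_j(f_{j+1}-f_j)^2+\int M\,p\,d\omega$ uniformly in $\rho^O$; taking the infimum over feasible $\rho^O$ on both sides proves \eqref{eq:qBound}. The main obstacle is exactly the out-of-range regime just discussed: the tempting move is to refine each measurement so that conditional states are supported near a single grid cell, but remote preparation only permits positive sub-decompositions of the given $\sigma_a$, and coherence across frequencies generally prevents such a frequency-sharp split. Replacing the impossible splitting argument by the monotonicity-based pointwise comparison against $f_N$ and $f_1$ is what makes the estimate go through, and it is where the hypotheses that $C$ be monotone on each half-line and normalized by $C(0)=0$, $C\ge0$ are essential; the curvature bound $C''\le b$ is used only in the in-range regime.
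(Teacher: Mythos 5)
Your proof is correct, and its overall architecture matches the paper's: both split the discretization loss into an out-of-range contribution, bounded by $\int M(\omega)p(\omega)\,d\omega$ via the same pointwise monotonicity inequality comparing an estimate beyond $[f_1,f_N]$ to the boundary estimate, and an in-range contribution, bounded by $\frac{b}{8}\max_j(f_{j+1}-f_j)^2$ via a second-order Taylor expansion using $C''\le b$. Two differences are worth noting. First, you work directly with the remote-state-preparation decomposition $\sum_a\sigma_a=\rho^O(t_f)$ for a fixed feasible final oracle state and merge outcomes that round to the same grid point; the paper works at the level of algorithms $\cQ$ and their conditional distributions $q(a|\omega,\cQ)$, simply reassigning estimates. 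These pictures are equivalent by Thms.~\ref{the:RSP} and~\ref{the:lowerBound}, and your observation that merging preserves positivity and the sum constraint is exactly what keeps the rounded tuple feasible for $S_M$. Second --- and this is the genuine divergence --- you round each optimal real estimate deterministically to the nearest element of $F$ and kill the first-order Taylor term using the stationarity $g_a'(f_a)=0$ of an interior minimizer; the paper instead replaces the clamped estimate $\tilde g(a)$ by a randomized mixture of the two bracketing grid points $f_j,f_{j+1}$ with weights $\lambda,1-\lambda$ chosen so that the linear terms cancel identically, obtaining the same constant $b/8$ without any stationarity assumption. Your route is a bit shorter; the paper's is a bit more robust, since it applies verbatim to arbitrary (not necessarily optimal) estimates and hence avoids the one soft spot in your write-up, namely the tension between ``near-optimal'' $f_a$ and the exact identity $g_a'(f_a)=0$. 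For the finitely supported conditional measures $\mu_a$ arising here the minimizer is in fact attained ($g_a$ is monotone outside the convex hull of the support), so this is harmless, but it is precisely where the randomized rounding would have spared you an argument.
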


The proof is in App.~\ref{app:b}. A strategy for obtaining an
initial choice of an $N$-point $F$ is to optimize the right-hand-side
of Eq.~(\ref{eq:qBound}). 

An important question is how many frequency estimates $f_a$ are needed
so that $S_C(p,F)=S_C(p,\rls)$. One can obtain an upper bound by
observing that the rank $r$ of the final oracle density matrix
$\rho^O(t_f)$ is bounded by $|Q|^{t_f}$ and $|O|$, where $|Q|$ and
$|O|$ are the dimensions of the query and oracle systems. From this
one finds that the optimal POVM can always be reduced to at most $r^2$
elements, implying that a set $F$ with $|F|\leq r^2$ suffices.  There
is evidence that a smaller set is optimal for the clock problem. For
$t_f=1$ and for some costs and priors, we need only $|F|=r$ elements.
\cite{measurement,prior}.

At present, we have no provably correct way of choosing the finite set
of frequency estimates optimally. However, for the quadratic cost
function $C(x)=x^2$, the following heuristic is often effective at
optimizing the choices given an oracle discretization.  We begin with
any set of estimates.  We then run the SDP and use the solution to
compute the posterior distribution for each measurement outcome:
\begin{equation} 
  p(\omega|f_a) = 
    \frac{p(f_a, \omega)}{p(f_a)}  
    = \frac{(\sigma_a)_{\omega,\omega}}{ \tr( \sigma_a ) },
    \label{eq:posteriors} 
\end{equation}
where the $\sigma_a$ are the measurement-conditional unnormalized
oracle states at the end of the algorithm.  Next, we compute the mean
of each of these distributions:
\begin{equation} 
\label{eq:posteriorMean} 
\langle \omega|a \rangle
  = \sum_x \frac{ (\sigma_r)_{x,x} }{\tr( \sigma_a) } \,\omega_x,
\end{equation} 
where $x$ indexes the finite set of frequencies and the expression in
angle brackets is the average of $\omega$ given that measurement
outcome $a$ was obtained.  Replacing the original estimates $f_a$ by
their posterior means is guaranteed to improve the cost without having
to change the algorithm.  We then run the SDP again, replacing the
frequency estimate $f_a$ with $\langle \omega|a \rangle$.  This
procedure is repeated until each estimate is numerically close to its
posterior mean.  The procedure can be adapted to other costs, but the
mean must be replaced by a statistic appropriate for the cost. For
example, for $C(x)=|x|$ we compute the median instead of the mean.

\subsection{ Oracle Discretization Error }
\label{sect:dErrorA}

Let $p(\omega)$ be the probability density of the prior distribution
of clock frequencies. For simplicity, we assume that the prior
distribution is absolutely continuous with respect to Lebesgue
measure. Let $P(\omega)$ be the cumulative distribution function of
$p(\omega)$, and $P^{-1}$ its inverse on $(0,1)$.  Given an offset
$o\in(0,1/d)$, we can define the probability distribution
\begin{equation}
p_o(\omega)
= \sum_{k=0}^{d-1} \frac{1}{d}\delta_{P^{-1}(o+k/d)}(\omega).
\end{equation}
This is an instance of a discretized prior.  Define $\omega(k,o)=
P^{-1}(o+k/d)$. The distribution $p_o$ is designed to approximate $p$
as $d$ goes to infinity. If we choose $o$ uniformly at random from
$(0,1/d)$, it is identical to $p$. Specifically,
\begin{align} \label{eq:inverseTransform}
& \int_{o=0}^{1/d} p_o(\omega) p(o) do \notag \\
& = \int_{o=0}^{1/d}d\sum_{k=0}^{d-1}\frac{1}{d}\delta_{\omega(k,o)}(\omega) do =
p(\omega).
\end{align}
This follows from inverse transform sampling \cite{inverseSampling} (pg. 28).

To estimate the optimum cost $S_C(p,F)$, we estimate the average
$\langle S_C(p_o,F)\rangle_{o\in(0,1/d)}$ by solving $S_C(p_o,F)$ for
a number of offsets $o$ chosen uniformly at random from $(0,1/d)$.
According to the next theorem, this gives a lower bound on $S_C(p,F)$.
We assume that the frequency estimates have already been
discretized, $F=\{f_a\}_a$.

\begin{theorem} \label{theorem:lowerBound} The following inequality holds:
$\langle S_C(p_o,F)\rangle_{o}\leq S_C(p,F)$.
\end{theorem}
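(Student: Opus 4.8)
The plan is to exploit the fact that $p$ is the average of the discretized priors $p_o$ over $o\in(0,1/d)$, as established in Eq.~(\ref{eq:inverseTransform}), together with a convexity argument showing that $S_C(\cdot,F)$ is a concave function of the prior. The key structural observation is that for a \emph{fixed} querier protocol---that is, a fixed initial state, sequence of inter-query unitaries $U^{QA}(t)$, POVM $\{P_a\}$, and frequency estimates $F$---the expected cost in Eq.~(\ref{eq:clockCost}) is \emph{linear} in the prior $p$, since $p(\omega)$ enters the cost integrand only as a multiplicative weight. Thus if I denote by $\langle C\rangle_{\text{protocol}}(p)$ the cost of a given protocol evaluated against prior $p$, then averaging over $o$ and using Eq.~(\ref{eq:inverseTransform}) gives the exact identity
\begin{equation}
\int_0^{1/d}\langle C\rangle_{\text{protocol}}(p_o)\,p(o)\,do
= \langle C\rangle_{\text{protocol}}(p).
\label{eq:linearAvg}
\end{equation}

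First I would fix an optimal protocol for the \emph{continuous} prior $p$ (or, to avoid infimum-versus-minimum subtleties, a protocol achieving cost within $\epsilon$ of $S_C(p,F)$). The crucial step is then the observation that this single protocol is feasible for every discretized prior $p_o$ as well, because the inter-query unitaries and POVM do not depend on which oracle frequencies carry nonzero weight; only the oracle initial state $\rho^O(0)$ and the cost computation change with the prior. Since $S_C(p_o,F)$ is the \emph{minimum} cost over all protocols for prior $p_o$, it can only be smaller than the cost of this particular fixed protocol against $p_o$. Hence
\begin{equation}
S_C(p_o,F)\leq \langle C\rangle_{\text{protocol}}(p_o)
\qquad\text{for every } o.
\label{eq:optLeq}
\end{equation}

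Combining the two displays finishes the argument: averaging Eq.~(\ref{eq:optLeq}) over $o$ with weight $p(o)$ and applying Eq.~(\ref{eq:linearAvg}) yields
\begin{equation}
\langle S_C(p_o,F)\rangle_o
\leq \int_0^{1/d}\langle C\rangle_{\text{protocol}}(p_o)\,p(o)\,do
= \langle C\rangle_{\text{protocol}}(p)
\leq S_C(p,F)+\epsilon,
\end{equation}
and letting $\epsilon\to 0$ gives the claim. In SDP language, the same argument runs through the feasible region directly: a solution $\{\rho^{OQ}(t),\sigma_a\}$ of $S_C(p,F)$ restricts, via the appropriate block of the oracle frequencies, to feasible solutions of each $S_C(p_o,F)$, and the objective $\tr(\sum_a\sigma_a A_a)$ decomposes linearly over the $o$-average by Eq.~(\ref{eq:inverseTransform}).

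The main obstacle I anticipate is making precise the claim that an optimal (or near-optimal) protocol for the continuous prior $p$ \emph{transfers} to each $p_o$ as a feasible point of the corresponding SDP. This requires verifying that the SDP constraints $S_E$---positivity, and the partial-trace relations linking $\rho^O(t)$ to $\rho^{OQ}(t)$ through $\Omega^{OQ}$---are preserved when the oracle state is restricted or reweighted according to $p_o$ rather than $p$. The natural way to handle this is to work entirely at the level of remotely preparable states via Thm.~\ref{the:RSP}: show that the map from a solution for $p$ to a candidate solution for $p_o$ (obtained by replacing $\rho^O(0)$ with the $p_o$ initial state while keeping the querier's operations fixed) sends the feasible set into the feasible set, and that the objective averages correctly. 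I would want to be careful about the normalization and the support of $p_o$ relative to the discretization grid $\{\omega_x\}$, so that the cost operators $A_a$ act consistently across all the $p_o$; this bookkeeping, rather than any deep inequality, is where the real work lies.
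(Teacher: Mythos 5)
Your proposal is correct and follows essentially the same route as the paper's proof: fix a (near-)optimal protocol for the prior $p$, use the fact that its conditional outcome distribution $q(a|\omega,\cQ)$ is independent of the prior so its cost is linear in the prior, bound each $S_C(p_o,F)$ by the cost of that fixed protocol against $p_o$, and average using Eq.~(\ref{eq:inverseTransform}). Your $\epsilon$-optimality hedge and the bookkeeping concerns you flag (support of $p_o$ versus the cost operators) are handled in the paper by working with the integral form of the cost and the factorization $r(a,\omega|\cQ)=q(a|\omega,\cQ)\,r(\omega)$, so no additional ideas are needed.
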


\begin{proof} 
Consider an arbitrary query algorithm $\cQ$. Given a prior $r(\omega)$,
$\cQ$ results in the measurement-conditional, unnormalized oracle states
$\sigma_a(r,\cQ)$.  Because the cost operators are diagonal, we can
consider just the diagonals of the $\sigma_a(r,\cQ)$, which define the joint
probability distributions $r(a,\omega|\cQ)$. 
Because the oracle operators are conditional on the standard oracle
basis, $r(a,\omega|\cQ)$ factors as
\begin{equation}
r(a,\omega|\cQ) = q(a|\omega, \cQ) r(\omega),
\end{equation}
where, as indicated, the distribution $q$ does not depend on $r$. 

The average cost for $\cQ$ and $r$ is given by
\begin{equation}
C(r,\cQ)  = \sum_a\int C(\omega-f_a)r(a,\omega|\cQ) d\omega.
\end{equation}
If $\cQ_{\min}$ is an optimal algorithm for
$S_C(r,F)$, then for any algorithm $\cQ$, it follows that $C(r,\cQ)\geq
C(r,\cQ_{\min})=S_C(r,F)$.  Let $\cQ_{o}$ and $\cQ_{\text{opt}}$ be optimal
algorithms for $S_C(p_o,F)$ and $S_C(p,F)$, respectively.
Then
\begin{align}
\langle S_C(p_o,F)\rangle_o 
  &=\langle C(p_o, \cQ_o)\rangle_o\notag\\
  &\leq \langle C(p_o, \cQ_{\text{opt}})\rangle_o,
\label{eq:scpoF1}
\end{align}
where the subscript on the expectations indicates that they are taken
with respect to the distribution over offsets $o$.  The intuition here
is that we can obtain a lower cost if we are able to choose a
different algorithm $\cQ_o$ for different choices of $o$, than if we
are forced to use the same algorithm $\cQ_{\text{opt}}$ in all cases.
We can continue from the last line of Eq.~\eqref{eq:scpoF1} as follows:
\begin{align}
\langle C(p_o, \cQ_{\text{opt}})\rangle_o\hspace{-.25in}&\notag\\
  &= \left\langle\sum_a\int C(\omega-f_a)
       p_o(a,\omega|\cQ_{\text{opt}}) d\omega\right\rangle_o \notag\\
  &= \left\langle\sum_a\int C(\omega-f_a)
       q(a|\omega, \cQ_{\text{opt}}) p_o(\omega) d\omega\right\rangle_o\notag\\
  &= \sum_a \int C(\omega-f_a) q(a|\omega, \cQ_{\text{opt}})
        \langle p_o(\omega)\rangle_o \,d\omega.
\label{eq:scpoF2}
\end{align}
Combining Eqs.~\eqref{eq:inverseTransform},\eqref{eq:scpoF1},\eqref{eq:scpoF2} then gives
\begin{align}
\langle S_C(p_o,F)\rangle_o 
  &\leq \sum_a \int C(\omega-f_a) q(a|\omega, \cQ_{\text{opt}})
        p(\omega) d\omega\notag\\
  &= C(p,\cQ_{\text{opt}}) = S_C(p,F),
\end{align}
proving the claim of the theorem.
\end{proof}

We note that Thm.~\ref{theorem:lowerBound} generalizes to arbitrary
oracle problems where the cost operators $A_a$ are
diagonal. Furthermore, the proof works for any family of probability
distributions $p_o$ such that $p$ is a mixture of the $p_o$.

Upper bounds on the optimal expected costs can be obtained by applying
the inequality in the next Theorem.

\begin{theorem}
\label{theorem:upperBound}
Let $\cQ$ be an algorithm for $S_C(r,F)$. Then
\begin{equation}
S_C(p,F)\leq \sum_a\int C(\omega-f_a) q(a|\omega,Q) p(\omega) d\omega.
\end{equation}
\end{theorem}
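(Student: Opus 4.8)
The plan is to argue directly from the definition of $S_C(p,F)$ as the minimum expected cost achievable by any querier algorithm under the prior $p$ and the frequency-estimate set $F$. The single structural fact the whole argument rests on is the prior-independence of the conditional distribution $q$, which was already established in the proof of Thm.~\ref{theorem:lowerBound}: because the oracle operators $\Omega^{OQ}$ are conditional on the standard oracle basis, the joint distribution over outcomes and frequencies produced by any algorithm $\cQ$ factors as $r(a,\omega|\cQ)=q(a|\omega,\cQ)\,r(\omega)$, with $q(a|\omega,\cQ)$ depending only on $\cQ$ (and on $\omega$ through $\Omega$), not on the prior. So first I would recall this factorization.

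Next I would observe that, although $\cQ$ is introduced as an algorithm for $S_C(r,F)$ and may have been optimized for a prior $r\neq p$, the very same algorithm can simply be run against the true prior $p$. Its expected cost under $p$ is then $C(p,\cQ)=\sum_a\int C(\omega-f_a)\,q(a|\omega,\cQ)\,p(\omega)\,d\omega$, using \emph{exactly} the same conditional $q(a|\omega,\cQ)$ precisely because $q$ is prior-independent; this is what lets us reuse an algorithm designed for $r$ to bound performance against $p$. The resulting expression is exactly the right-hand side of the claimed inequality.

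Finally, since $S_C(p,F)$ is by definition the infimum of $C(p,\cQ')$ over all admissible algorithms $\cQ'$ for the prior $p$, and $\cQ$ is one such algorithm, we get $S_C(p,F)\leq C(p,\cQ)$, which is the desired bound. I do not expect a genuine obstacle here: all the work has already been done in the previous theorem, and the only point worth flagging is that the subscript $r$ on $\cQ$ is inessential once prior-independence of $q$ is in hand, so the bound in fact holds for \emph{any} algorithm. (I would also read the $q(a|\omega,Q)$ in the statement as $q(a|\omega,\cQ)$.)
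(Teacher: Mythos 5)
Your proposal is correct and follows essentially the same route as the paper: the right-hand side is the expected cost of running $\cQ$ under prior $p$ (legitimate because $q(a|\omega,\cQ)$ is prior-independent, as established in the proof of Thm.~\ref{theorem:lowerBound}), and $S_C(p,F)$ is by definition the optimum over all algorithms, so the inequality is immediate. The paper's proof is just a terser version of the same argument.
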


\begin{proof}
The right-hand-side is the expected cost for algorithm $\cQ$ given
oracle prior $p$.  Since $\cQ$ is optimal for prior $r$ but not
necessarily for $p$, this must be greater than $S_C(p,F)$,
which is the optimum expected cost for prior $p$.
\end{proof}

In view of the results of this section, we adopt the following procedure
$\cA(p,F)$ for estimating a lower bound and calculating an upper
bound of $S_C(p,F)$.
\begin{enumerate}
\item Choose discretization parameter $d$ and the number of random samples $k$.
  Large $d$ should tighten the bounds. Large $k$ improves the statistical
  estimate of the lower bound.
\item Independently choose $o_j\in(0,1/d)$, $j\in\{1,\ldots,k\}$
  uniformly at random.
\item Do the following for each $j\in\{1,\ldots,k\}$:
  \begin{enumerate}
  \item  Compute the optimum cost $C_j=S_C(p_{o_j},F)$, and 
    from the SDP solution, reconstruct an optimal algorithm $\cQ_j$ achieving
    this cost.
  \item  From $\cQ_j$ derive an algorithm for evaluating $q(a|\omega, \cQ_j)$.
  \item  Using this algorithm, evaluate
    \begin{equation*}
    \overline{C_j} = \sum_a\int C(\omega-f_a) q(a|\omega,\cQ) p(\omega) d\omega
    \end{equation*}
    by numerical integration.
  \end{enumerate}
\item Return $\frac{1}{k}\sum_j C_j$ as a statistical estimate of a
  lower bound (together with its estimated error) and
  $\min_j(\overline{C_j})$ as a numerical upper bound.
\end{enumerate}

\subsection{Procedure for Solving the Clock SDP}

Combining the ideas of this section, we use the following procedure
for approximately solving the clock SDP $S_C(p,\rls)$:
\begin{enumerate}
\item
  Choose a discretization $F$, $|F|=m$, of the frequency estimates. If the
  cost function is suitable, we can optimize the right-hand-side
  of Eq.~\ref{eq:qBound} and let $\epsilon_q$ be the corresponding
  querier discretization error bound.
\item
  Apply procedure $\cA(p,F)$ and let $c_l$ be the statistically estimated
  lower bound with estimated standard error $s_l$, and $c_u$ the
  numerical upper bound obtained.
\item 
  Give the estimated cost of $S_C(p,\rls)$ in the form
  $((c_l-\epsilon_q)\pm s_l, c_u)$.
\end{enumerate}
If the cost function is not suitable, we set $\epsilon_q=0$
and describe the discretization $F$ for which the bounds apply.  
		
\section{Results}
\label{sec:results}

We begin by considering one-query clock protocols ($t_f=1$).  In this
case the protocol consists of an initial query state to be prepared
and a final measurement.  Figures \ref{fig:buzekComparisonPeriodic}
and \ref{fig:buzekComparisonQuad} illustrate the importance of taking
into account prior knowledge when deriving optimal clock protocols.
Here our technique is compared to that of Ref.~\cite{buzek}, which
derives protocols under the assumption that $\omega$ is uniformly
distributed on $[-\pi,\pi]$. We consider Gaussian priors of various
widths and see that solving the clock SDP can substantially reduce the
expected cost.  If we use the periodic cost function $4
\sin^2(\frac{\omega - \omega_a}{2})$ considered in~\cite{buzek}, in
the limit of wide prior, we obtain identical protocols. This is
illustrated by Table~\ref{tab:initialStates}, which lists initial
$2$-atom states that optimize this cost function; the final row
corresponds to the state computed in Ref.~\cite{buzek}.

\begin{figure*}
\resizebox{1\textwidth}{!}{
\subfigure[\ One atom ]{
\includegraphics[width=.4\textwidth]{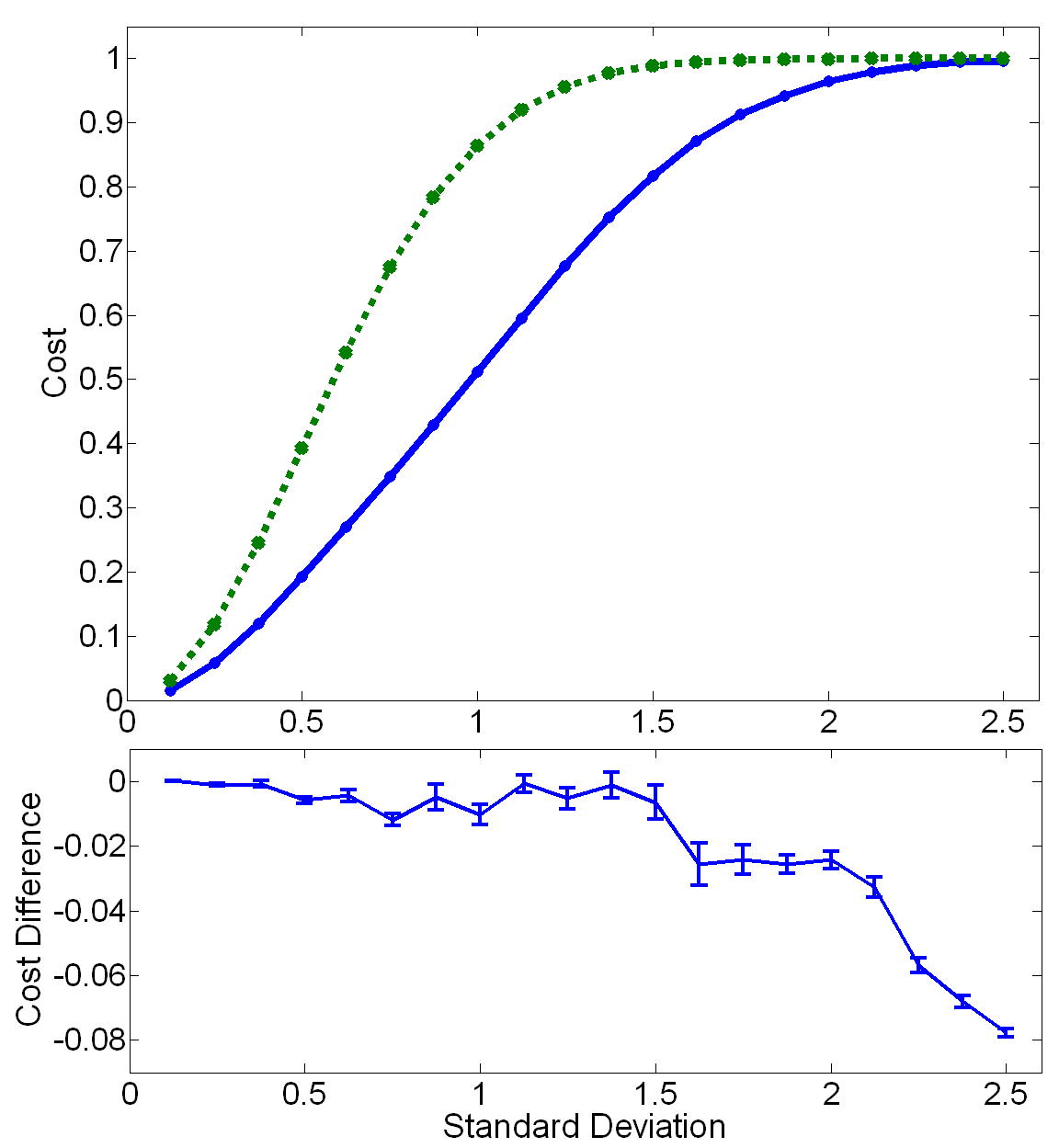}}
\subfigure[\ Two atoms ]{
\includegraphics[width=.4\textwidth]{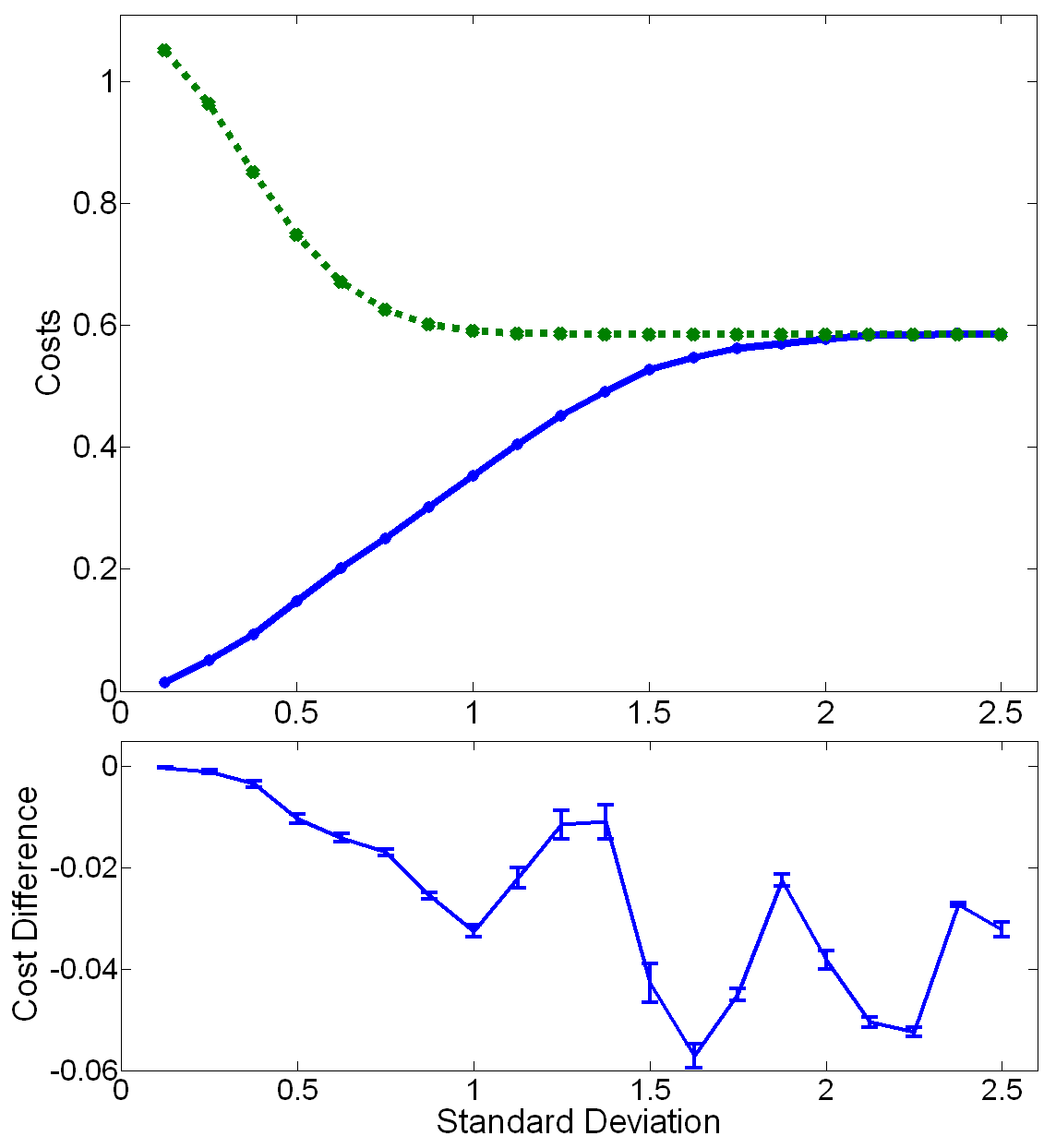}}
}

\caption{Comparison of the protocol derived in Ref. \cite{buzek} and
  those derived by our method for one and two atoms and the cost
  function $C(\omega - f_a) = 4\sin^2(\frac{\omega - f_a}{2})$.  We
  use a $15$-point oracle discretization, $20$ frequency estimates, and
  simulate Gaussian priors of various widths.  The figures on the top
  plot the cost computed in Ref. \cite{buzek} (dashed line) and our
  numerical upper bound (solid line), which as discussed, is
  equivalent to the minimum continuous cost obtained by one of our
  extracted algorithms.  The figures on the bottom plot the difference
  between our lower bound and upper bound, $c_l - c_u$, illustrating
  both the strength of our bounds and how much lower the continuous
  cost could potentially be.  The lower bound was computed by
  averaging $100$ discretizations; error bars show the estimated
  standard error of the average thus obtained.  Our querier
  discretization bounds cannot be applied to the periodic cost
  function used here, so the lower bounds are for the discretizations
  chosen.  The lines connecting the data points in our figures are to
  guide the eyes.}
\label{fig:buzekComparisonPeriodic}
\end{figure*}

\begin{figure*}
\resizebox{1\textwidth}{!}{
\subfigure[\ Two atoms ]{
\includegraphics[width=.4\textwidth]{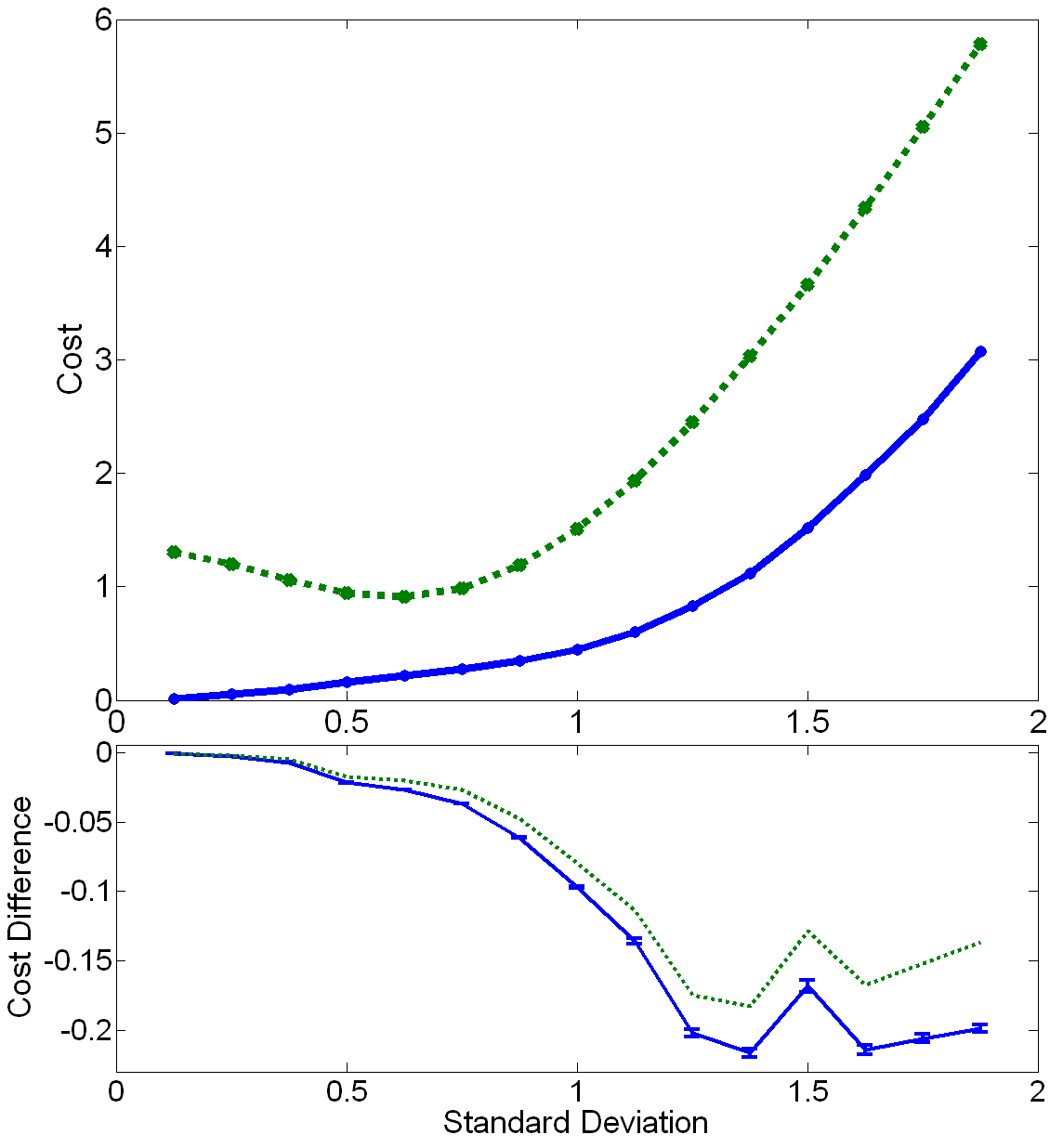}}
\subfigure[\ Four atoms]{
\includegraphics[width=.4\textwidth]{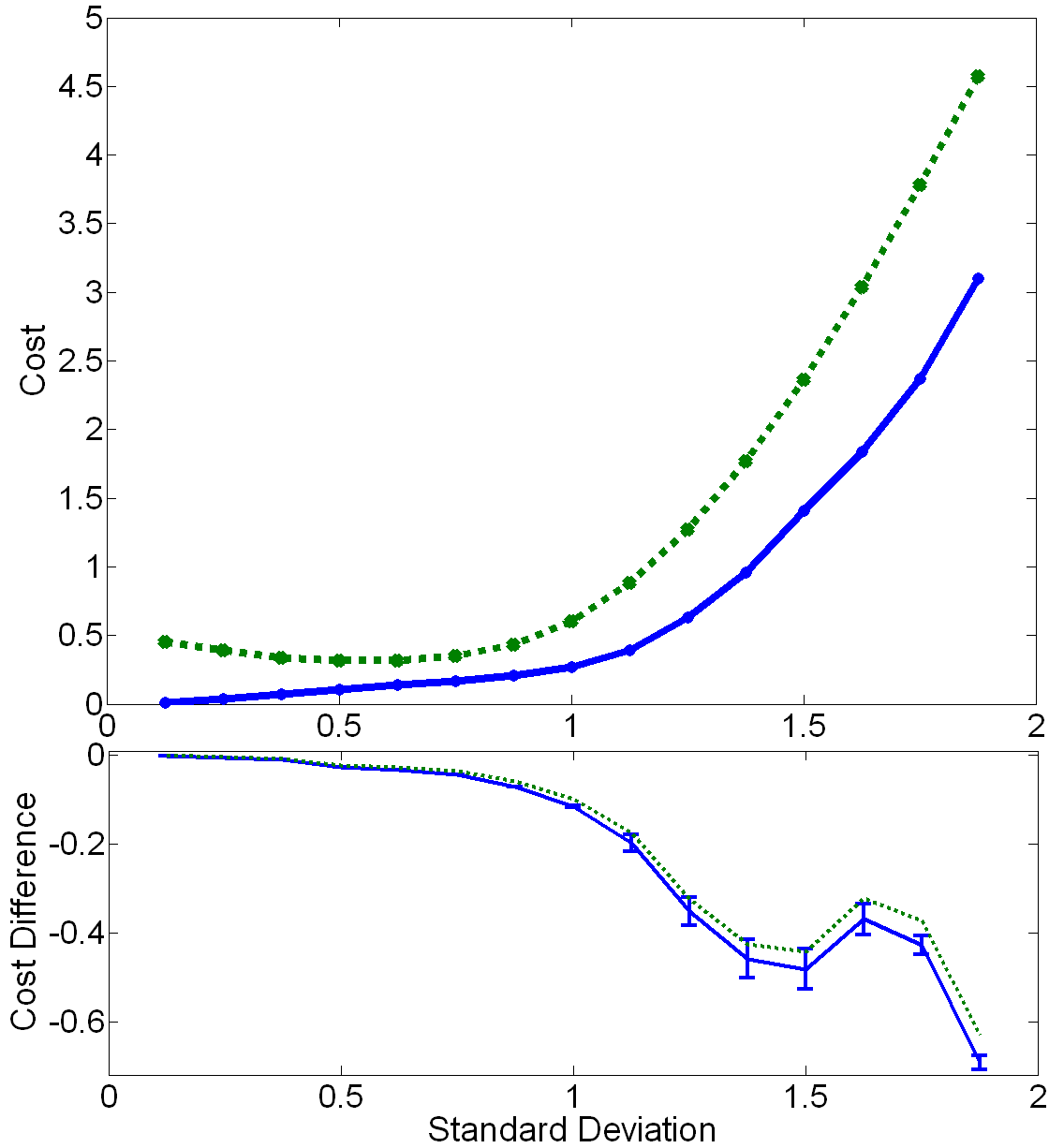}}
}

\caption{Comparison of the protocol derived in Ref. \cite{buzek} and
  those derived by our method for two and four atoms and the cost
  function $C(\omega - f_a) = (\omega - f_a)^{2}$.  SDP parameters are
  identical to those in Fig.~\ref{fig:buzekComparisonPeriodic}.  The
  plots are also as in Fig.~\ref{fig:buzekComparisonPeriodic}, except
  that for this cost function, we can compute querier discretization
  errors $\epsilon_q$.  Thus, the lower plots show $c_l-c_u$ as dotted
  lines above the line for $c_l-c_u-\epsilon_q$, which gives the lower
  bounds for the continuous problem. }
\label{fig:buzekComparisonQuad}
\end{figure*}

\begin{table}[!h]
\begin{tabular}{ c  | c  c  c}
Standard Deviation & $|0\rangle$ & $|1\rangle$ & $|2\rangle$ \\
\hline
.25 & .7071 & 0 & .7071 \\
.75 & .5626 & .6058 & .5626 \\
1.25 & .5170 & .6823 & .5170 \\
1.75 & .5025 & .7035 & .5025 \\
2.25 & .5000 & .7071 & .5000 \\
\end{tabular}
\caption{ Initial states that minimize the cost function $C( \omega -
  f_a) = 4 \sin^2(\frac{\omega - f_a}{2})$ for one-query protocols
  assuming different width Gaussian priors.  SDP parameters are
  identical to those of Fig.~\ref{fig:buzekComparisonPeriodic} }
	\label{tab:initialStates}
	\end{table}

For computing the graphs of Figs.~\ref{fig:buzekComparisonPeriodic}
and \ref{fig:buzekComparisonQuad}, and the initial states in
Table~\ref{tab:initialStates}, we did not optimize the frequency
estimates according to the iterative technique described at the end of
Sect.~\ref{sect:dErrorB}.  In order to achieve sufficiently small
discretization error, we discretized the prior frequency distribution
with $15$ points and used $20$ frequency estimates chosen by
minimizing Eq.~(\ref{eq:qBound}). The bounds are based on $100$ random
discretizations to obtain sufficiently good statistics on the lower
bound.

To verify the technique for obtaining error bounds, we compare our
upper and lower bounds to the optimal solution obtained according to
the formulas in Ref.~\cite{prior}; this is illustrated in
Fig.~\ref{fig:oracleComparison}.  We use the optimal set of classical
frequency estimates derived in Ref.~\cite{prior}; therefore, in the
continuous limit, the clock SDP and that of Ref.~\cite{prior} should
yield identical costs. Consequently, the deviation depicted in
Fig.~\ref{fig:oracleComparison} is due entirely to discretization
error and the limitations of our bounds.

\begin{figure}[ht]
\includegraphics[width=.5\textwidth]{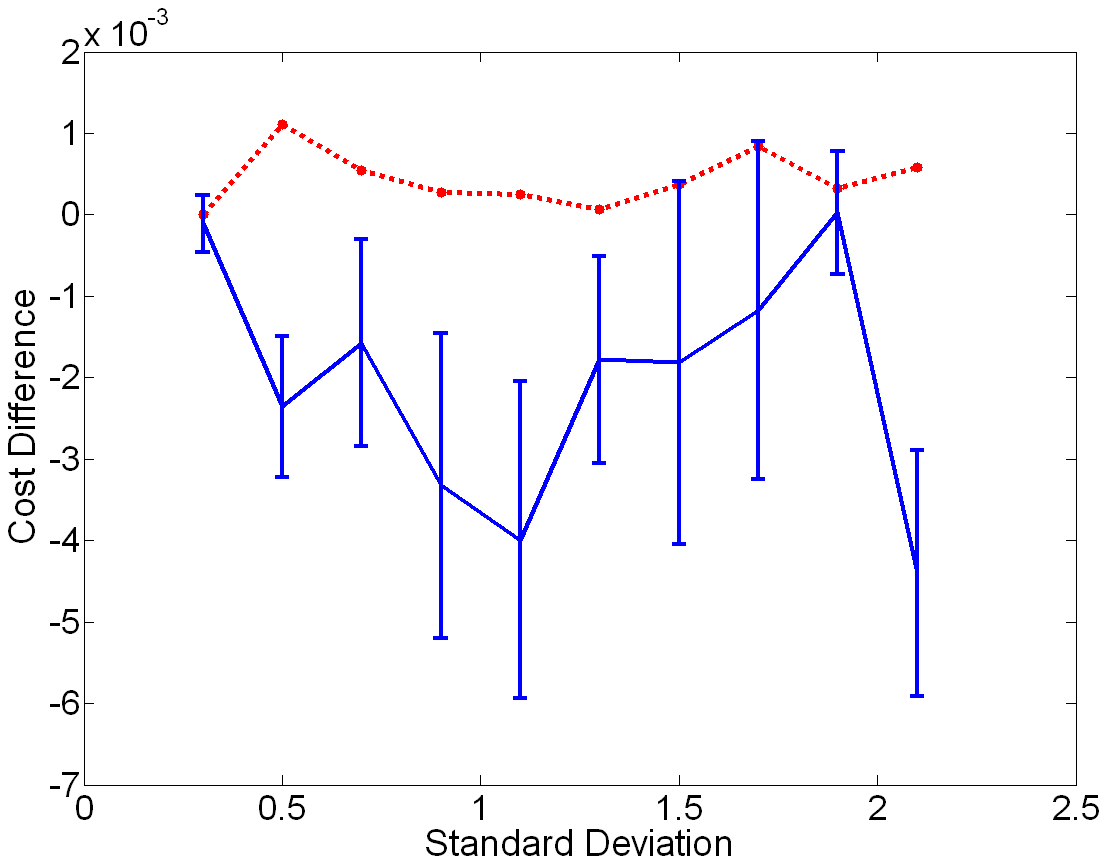}
\caption{Comparison of the optimal protocol derived in
  Ref.~\cite{prior} and the one obtained by our SDP for two atoms.  The
  dashed line depicts the difference between our upper bound and the
  cost computed in Ref.~\cite{prior}, and the solid line depicts the
  difference between our lower bound and the cost computed in
  Ref.~\cite{prior}.  Error bars again correspond to the estimated
  standard error in the lower bound. We used a $15$-point prior
  discretization and set $F$ to the optimal set of frequency estimates
  derived in~\cite{prior}, so there is no querier discretization
  error.  The lower bound was computed by averaging 500 random
  oracle discretizations.}
\label{fig:oracleComparison}
\end{figure}

As discussed, our procedure can also optimize sequences of two or more
clock queries ($t_f\geq 2$).  If these queries are fully coherent, the
algorithm is of the form given in Eq.~(\ref{eq:stateEvolution}), and
the SDP implicitly optimizes the initial state, the $U(t_i)$ and the
measurement.  Alternatively, we can combine the two queries
classically.  In this case we update our knowledge of the clock's
phase using Bayes' rule between the queries.  That is, after the first
query, we compute posterior distributions for each measurement
outcome, as in Eq.~(\ref{eq:posteriors}).  We then run the SDP
again, once for each outcome, using the corresponding posterior
distribution as the new prior.  We compute a new cost by averaging
each of the costs obtained, weighted by the probability of obtaining
the corresponding measurement outcome, $\tr(\sigma_a)$.  Here we
assume that there is no noise between sequential queries.  Any noise
would affect the intermediate prior distributions.
Fig.~\ref{fig:twoQueries} compares these two methods for a sequence of
two queries with two atoms.  Here we used $|F|=25$ possible frequency
estimates and the iterative technique described in
Sect.~\ref{sect:dErrorB} for optimizing them.  We found that the
number of frequency estimates needed was substantially reduced after
optimization.  For one query ($t_f=1$), and when such queries are
combined classically, three estimates per query suffice after applying
estimate optimization. When two queries are combined quantumly, five
estimates suffice. It appears that, as expected, while we gain
information by combining queries classically, fully coherent queries
provide the greatest advantage.  However, the technique developed for
computing lower bounds cannot be applied to classical combinations of
queries, so this advantage remains to be proven.
	
\begin{figure}[ht]
\includegraphics[width=.5\textwidth]{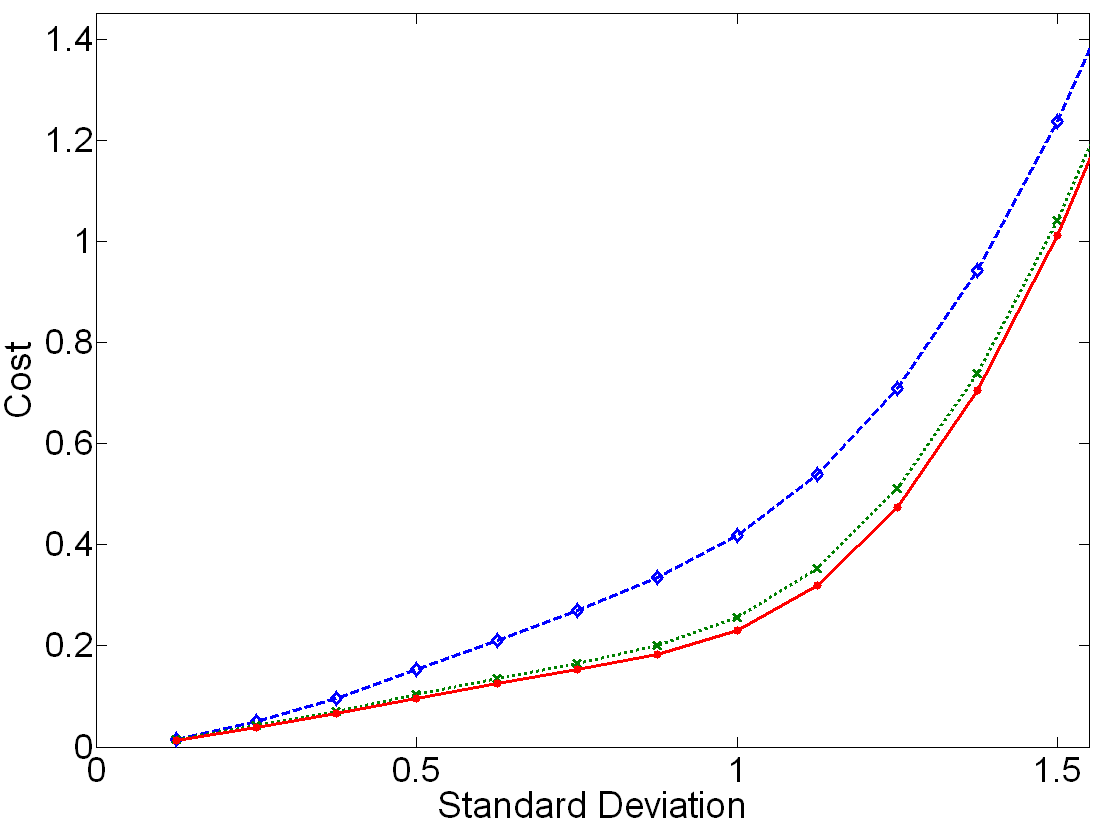}
\caption{Average cost after one query (top, dashed line), two queries
  combined classically (middle, dotted line), and two coherent quantum
  queries (bottom, solid line). We used a $15$-point oracle
  discretization, $|F| = 25$, the cost function $C(\omega - f_a) =
  (\omega - f_a)^2$, and simulated two atoms. }
\label{fig:twoQueries}
\end{figure}
	
Table \ref{tab:multipleQueries} examines the effect of adding more
atoms to the clock.  Additional atoms and additional queries always
provide an advantage, as can be seen by reading down the table.  The
far right column gives the computational time required to run the SDP
for a single discretization on a quad core $2.8\; \textrm{GHz}$
machine with $12\;\textrm{GiB}$ of RAM.  Note that we are able to
simulate more atoms than Table \ref{tab:multipleQueries} may imply.
For example, a simulation of $10$ atoms using a uniform $15$-point
oracle discretization and $t_f=1$ was computed in 6 minutes, 3
seconds, and yields a cost of $0.0785$. Computing bounds for a system
of this size, however, would require a great deal of computational
time.
	
\begin{table}[b]
  \begin{tabular}{ c  | c | c | c |c |c|c  }
	
		Number & Number & $c_l$ & $c_u$ & $s_l$ & $\epsilon_q$ & time  \\
                of Atoms & of Queries & & & & & (min:sec) \\
		\hline
		1 & 1 & .6010 & .6321 & .0127 & .0152 & 2:24\\
		2 & 1 & .4083 & .4379 & .0109 & .0164 & 2:33\\
		3 & 1 & .2885 & .3263 & .0105 & .0177 & 3:07\\
		4 & 1 & .1974 & .2563 & .0045 & .0192 & 2:46\\
		1 & 2 & .4144 & .4379 & .0132 & .0164 & 2:14\\
 		2 & 2 & .1957 & .2565 & .0047 & .0192 & 3:35\\
 		3 & 2 & .1071 & .2119 & .0020 & .0229 & 4:07\\
 		4 & 2 & .0902 & .2657 & .0022 & .0229 & 4:58
 		
  \end{tabular}
  \caption{Costs for various combinations of atoms and queries for a
    Gaussian prior with a standard deviation of $1$.  SDP parameters
    are identical to those in Fig.~\ref{fig:twoQueries}. The times are given
    for solving one instance of the SDP. $100$ instances were solved for
    the estimates of $c_l$ and $c_u$.}
  \label{tab:multipleQueries}
\end{table}

Notice that on the far right of Fig.~\ref{fig:buzekComparisonQuad}(b),
and in the last two rows of Table \ref{tab:multipleQueries}, the gap
between the lower bound and the upper bound becomes quite large.  We
need to increase the number of points in our discretizations if we
wish to compute better approximations to the optimal solution for the
continuous problem. This is as expected, since having more queries or
atoms enables finer resolution of oracle frequencies.
Fig.~\ref{fig:discretization} illustrates the effect on our bounds of
increasing the number of points in the oracle discretization.  Here we
are reanalyzing the last point in
Fig.~\ref{fig:buzekComparisonQuad}(b).

\begin{figure}[ht]
\includegraphics[width=.45\textwidth]{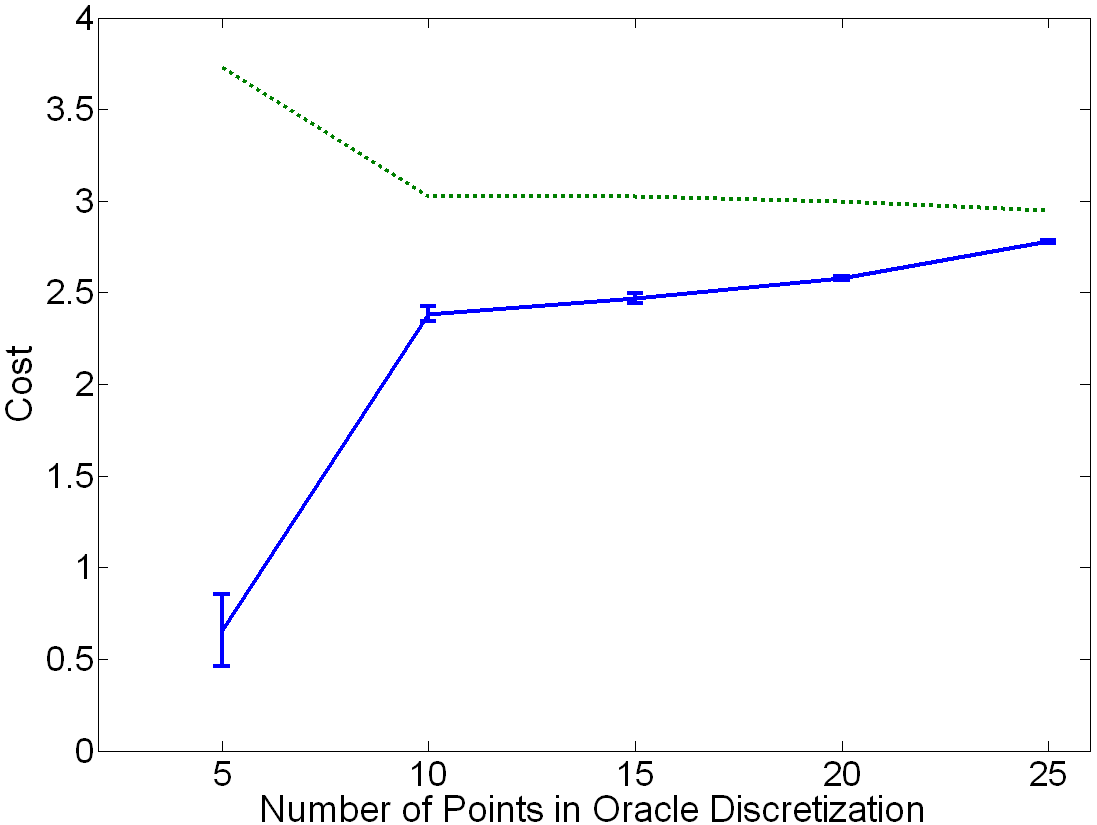}
\caption{Upper ($c_u$, dotted line) and lower ($c_l$, solid line)
  bounds on cost for oracle discretizations with varying numbers of
  points.  We are simulating the last point of
  Fig.~\ref{fig:buzekComparisonQuad}(b) - that is, 4 atoms, a standard
  deviation of 1.875, and the quadratic cost function.  The lower
  bound is computed by averaging 32 random oracle discretizations.
  Note that the values in Fig.~\ref{fig:buzekComparisonQuad} are for
  a $15$ point oracle discretization.}
\label{fig:discretization}
\end{figure}
	
In summary, the method presented here is a general way of deriving
better quantum clock protocols.  Discretization is necessary, but the
error introduced can be controlled. While the complexity of the SDPs
to be solved limits the application of the method to small quantum
systems, there are very promising atomic clocks, such as the ion-based
ones, that use only a small number of atoms.
	
The optimization strategy pursued here is greedy, taking into account
only the most recent prior. Therefore, it does not necessarily
minimize long-term variance. Further research is required to develop
more realistic strategies, taking advantage of knowledge of the clock's
history.

Finally, we note that the general form of the SDP and the
discretization analysis given here can be used to extend the adversary
method to continuous, noisy, and classical problems. It may also find
applications to quantum parameter estimation problems other than phase
or frequency estimation, where we wish to estimate the value of a
parameter $x$ of an arbitrary but known family of unitary operators
$U(x)$ that can be applied to quantum states of our choice.

\begin{acknowledgments}
This paper is a contribution of the National Institute of Standards
and Technology and not subject to U.S. copyright.  We thank Till
Rosenband and Didi Leibfried for pointing us to the clock problem.
\end{acknowledgments}

\bibliographystyle{apsrev}
\bibliography{clock}

\begin{thebibliography}{19}
\expandafter\ifx\csname natexlab\endcsname\relax\def\natexlab#1{#1}\fi
\expandafter\ifx\csname bibnamefont\endcsname\relax
  \def\bibnamefont#1{#1}\fi
\expandafter\ifx\csname bibfnamefont\endcsname\relax
  \def\bibfnamefont#1{#1}\fi
\expandafter\ifx\csname citenamefont\endcsname\relax
  \def\citenamefont#1{#1}\fi
\expandafter\ifx\csname url\endcsname\relax
  \def\url#1{\texttt{#1}}\fi
\expandafter\ifx\csname urlprefix\endcsname\relax\def\urlprefix{URL }\fi
\providecommand{\bibinfo}[2]{#2}
\providecommand{\eprint}[2][]{\url{#2}}

\bibitem[{\citenamefont{{Schmidt} et~al.}(2005)\citenamefont{{Schmidt},
  {Rosenband}, {Langer}, {Itano}, {Bergquist}, and {Wineland}}}]{alClock}
\bibinfo{author}{\bibfnamefont{P.~O.} \bibnamefont{{Schmidt}}},
  \bibinfo{author}{\bibfnamefont{T.}~\bibnamefont{{Rosenband}}},
  \bibinfo{author}{\bibfnamefont{C.}~\bibnamefont{{Langer}}},
  \bibinfo{author}{\bibfnamefont{W.~M.} \bibnamefont{{Itano}}},
  \bibinfo{author}{\bibfnamefont{J.~C.} \bibnamefont{{Bergquist}}},
  \bibnamefont{and} \bibinfo{author}{\bibfnamefont{D.~J.}
  \bibnamefont{{Wineland}}}, \bibinfo{journal}{Science}
  \textbf{\bibinfo{volume}{309}}, \bibinfo{pages}{749} (\bibinfo{year}{2005}).

\bibitem[{\citenamefont{{Bu{\v z}ek} et~al.}(1999)\citenamefont{{Bu{\v z}ek},
  {Derka}, and {Massar}}}]{buzek}
\bibinfo{author}{\bibfnamefont{V.}~\bibnamefont{{Bu{\v z}ek}}},
  \bibinfo{author}{\bibfnamefont{R.}~\bibnamefont{{Derka}}}, \bibnamefont{and}
  \bibinfo{author}{\bibfnamefont{S.}~\bibnamefont{{Massar}}},
  \bibinfo{journal}{Physical Review Letters} \textbf{\bibinfo{volume}{82}},
  \bibinfo{pages}{2207} (\bibinfo{year}{1999}),
  \eprint{arXiv:quant-ph/9808042}.

\bibitem[{\citenamefont{Ramsey}(1956)}]{ramsey}
\bibinfo{author}{\bibfnamefont{N.}~\bibnamefont{Ramsey}},
  \emph{\bibinfo{title}{Molecular Beams}} (\bibinfo{publisher}{Oxford
  University Press}, \bibinfo{year}{1956}).

\bibitem[{\citenamefont{Wineland et~al.}(1994)\citenamefont{Wineland,
  Bollinger, Itano, and Heinzen}}]{wineland:qc1994a}
\bibinfo{author}{\bibfnamefont{D.~J.} \bibnamefont{Wineland}},
  \bibinfo{author}{\bibfnamefont{J.~J.} \bibnamefont{Bollinger}},
  \bibinfo{author}{\bibfnamefont{W.~M.} \bibnamefont{Itano}}, \bibnamefont{and}
  \bibinfo{author}{\bibfnamefont{D.~J.} \bibnamefont{Heinzen}},
  \bibinfo{journal}{Phys. Rev. A} \textbf{\bibinfo{volume}{50}},
  \bibinfo{pages}{67} (\bibinfo{year}{1994}).

\bibitem[{\citenamefont{Bollinger et~al.}(1996)\citenamefont{Bollinger, Itano,
  Wineland, and Heinzen}}]{bollinger:qc1996a}
\bibinfo{author}{\bibfnamefont{J.~J.} \bibnamefont{Bollinger}},
  \bibinfo{author}{\bibfnamefont{W.~M.} \bibnamefont{Itano}},
  \bibinfo{author}{\bibfnamefont{D.~J.} \bibnamefont{Wineland}},
  \bibnamefont{and} \bibinfo{author}{\bibfnamefont{D.~J.}
  \bibnamefont{Heinzen}}, \bibinfo{journal}{Phys. Rev. A}
  \textbf{\bibinfo{volume}{54}}, \bibinfo{pages}{R4649} (\bibinfo{year}{1996}).

\bibitem[{\citenamefont{Giovannetti et~al.}(2011)\citenamefont{Giovannetti,
  Lloyd, and Maccone}}]{giovannetti:qc2011a}
\bibinfo{author}{\bibfnamefont{V.}~\bibnamefont{Giovannetti}},
  \bibinfo{author}{\bibfnamefont{S.}~\bibnamefont{Lloyd}}, \bibnamefont{and}
  \bibinfo{author}{\bibfnamefont{L.}~\bibnamefont{Maccone}},
  \bibinfo{journal}{Nat. Phot.} \textbf{\bibinfo{volume}{5}},
  \bibinfo{pages}{222} (\bibinfo{year}{2011}).

\bibitem[{\citenamefont{Holevo}(1982)}]{holevo:qc1982a}
\bibinfo{author}{\bibfnamefont{A.~S.} \bibnamefont{Holevo}},
  \emph{\bibinfo{title}{Probablistic and Statistical Aspects of Quantum
  Theory}} (\bibinfo{publisher}{North-Holland}, \bibinfo{address}{Amsterdam},
  \bibinfo{year}{1982}).

\bibitem[{\citenamefont{Demkowicz-Dobrzanski}(2011)}]{prior}
\bibinfo{author}{\bibfnamefont{R.}~\bibnamefont{Demkowicz-Dobrzanski}}
  (\bibinfo{year}{2011}), \eprint{arXiv:1102.0786}.

\bibitem[{\citenamefont{Wineland et~al.}(1998)\citenamefont{Wineland, Monroe,
  Itano, Leibfried, King, and Meekhof}}]{winelandBible}
\bibinfo{author}{\bibfnamefont{D.~J.} \bibnamefont{Wineland}},
  \bibinfo{author}{\bibfnamefont{C.}~\bibnamefont{Monroe}},
  \bibinfo{author}{\bibfnamefont{W.~M.} \bibnamefont{Itano}},
  \bibinfo{author}{\bibfnamefont{D.}~\bibnamefont{Leibfried}},
  \bibinfo{author}{\bibfnamefont{B.~E.} \bibnamefont{King}}, \bibnamefont{and}
  \bibinfo{author}{\bibfnamefont{D.~M.} \bibnamefont{Meekhof}},
  \bibinfo{journal}{J. Res. NIST} \textbf{\bibinfo{volume}{103}},
  \bibinfo{pages}{259} (\bibinfo{year}{1998}).

\bibitem[{\citenamefont{Andr\'e et~al.}(2004)\citenamefont{Andr\'e,
  S\o{}rensen, and Lukin}}]{squeezedClocks}
\bibinfo{author}{\bibfnamefont{A.}~\bibnamefont{Andr\'e}},
  \bibinfo{author}{\bibfnamefont{A.~S.} \bibnamefont{S\o{}rensen}},
  \bibnamefont{and} \bibinfo{author}{\bibfnamefont{M.~D.} \bibnamefont{Lukin}},
  \bibinfo{journal}{Phys. Rev. Lett.} \textbf{\bibinfo{volume}{92}},
  \bibinfo{pages}{230801} (\bibinfo{year}{2004}).

\bibitem[{\citenamefont{Barnum et~al.}(2003)\citenamefont{Barnum, Saks, and
  Szegedy}}]{BSS}
\bibinfo{author}{\bibfnamefont{H.}~\bibnamefont{Barnum}},
  \bibinfo{author}{\bibfnamefont{M.~E.} \bibnamefont{Saks}}, \bibnamefont{and}
  \bibinfo{author}{\bibfnamefont{M.}~\bibnamefont{Szegedy}}, in
  \emph{\bibinfo{booktitle}{IEEE Conference on Computational Complexity'03}}
  (\bibinfo{year}{2003}), pp. \bibinfo{pages}{179--193}.

\bibitem[{\citenamefont{Barnum}(2007)}]{barnumSDP}
\bibinfo{author}{\bibfnamefont{H.}~\bibnamefont{Barnum}}
  (\bibinfo{year}{2007}), \eprint{arXiv:quant-ph/0703141}.

\bibitem[{\citenamefont{Vandenberghe and Boyd}(1996)}]{SDP}
\bibinfo{author}{\bibfnamefont{L.}~\bibnamefont{Vandenberghe}}
  \bibnamefont{and} \bibinfo{author}{\bibfnamefont{S.}~\bibnamefont{Boyd}},
  \bibinfo{journal}{SIAM Review} \textbf{\bibinfo{volume}{38}},
  \bibinfo{pages}{pp. 49} (\bibinfo{year}{1996}),
  \urlprefix\url{http://www.jstor.org/stable/2132974}.

\bibitem[{\citenamefont{Hughston et~al.}(1993)\citenamefont{Hughston, Jozsa,
  and Wootters}}]{GHJW}
\bibinfo{author}{\bibfnamefont{L.}~\bibnamefont{Hughston}},
  \bibinfo{author}{\bibfnamefont{R.}~\bibnamefont{Jozsa}}, \bibnamefont{and}
  \bibinfo{author}{\bibfnamefont{W.}~\bibnamefont{Wootters}},
  \bibinfo{journal}{Phys Lett A} \textbf{\bibinfo{volume}{183}},
  \bibinfo{pages}{14} (\bibinfo{year}{1993}).

\bibitem[{\citenamefont{Hoyer et~al.}(2007)\citenamefont{Hoyer, Lee, and
  Spalek}}]{negAdv}
\bibinfo{author}{\bibfnamefont{P.}~\bibnamefont{Hoyer}},
  \bibinfo{author}{\bibfnamefont{T.}~\bibnamefont{Lee}}, \bibnamefont{and}
  \bibinfo{author}{\bibfnamefont{R.}~\bibnamefont{Spalek}}, in
  \emph{\bibinfo{booktitle}{Proceedings of the thirty-ninth annual ACM
  symposium on Theory of computing}} (\bibinfo{publisher}{ACM},
  \bibinfo{address}{New York, NY, USA}, \bibinfo{year}{2007}), STOC '07, pp.
  \bibinfo{pages}{526--535},
  \urlprefix\url{http://doi.acm.org/10.1145/1250790.1250867}.

\bibitem[{\citenamefont{Lee et~al.}(2010)\citenamefont{Lee, Mittal, Reichardt,
  and Spalek}}]{rAdv}
\bibinfo{author}{\bibfnamefont{T.}~\bibnamefont{Lee}},
  \bibinfo{author}{\bibfnamefont{R.}~\bibnamefont{Mittal}},
  \bibinfo{author}{\bibfnamefont{B.~W.} \bibnamefont{Reichardt}},
  \bibnamefont{and} \bibinfo{author}{\bibfnamefont{R.}~\bibnamefont{Spalek}}
  (\bibinfo{year}{2010}), \eprint{arXiv:1011.3020v1}.

\bibitem[{\citenamefont{Derka et~al.}(1998)\citenamefont{Derka,
  Buz\ifmmode~\breve{}\else \u{}\fi{}ek, and Ekert}}]{measurement}
\bibinfo{author}{\bibfnamefont{R.}~\bibnamefont{Derka}},
  \bibinfo{author}{\bibfnamefont{V.}~\bibnamefont{Buz\ifmmode~\breve{}\else
  \u{}\fi{}ek}}, \bibnamefont{and} \bibinfo{author}{\bibfnamefont{A.~K.}
  \bibnamefont{Ekert}}, \bibinfo{journal}{Phys. Rev. Lett.}
  \textbf{\bibinfo{volume}{80}}, \bibinfo{pages}{1571} (\bibinfo{year}{1998}).

\bibitem[{\citenamefont{Devroye}(1986)}]{inverseSampling}
\bibinfo{author}{\bibfnamefont{L.}~\bibnamefont{Devroye}},
  \emph{\bibinfo{title}{Non-Uniform Random Variate Generation}}
  (\bibinfo{publisher}{Springer-Verlag}, \bibinfo{year}{1986}).

\bibitem[{\citenamefont{Nielsen and Chuang}(2001)}]{nielsen:qc2001a}
\bibinfo{author}{\bibfnamefont{M.~A.} \bibnamefont{Nielsen}} \bibnamefont{and}
  \bibinfo{author}{\bibfnamefont{I.~L.} \bibnamefont{Chuang}},
  \emph{\bibinfo{title}{Quantum Computation and Quantum Information}}
  (\bibinfo{publisher}{Cambridge University Press},
  \bibinfo{address}{Cambridge, UK}, \bibinfo{year}{2001}).

\end{thebibliography}
	
\appendix

\section{Remote State Preparation}
\label{app:a}

Here we prove Thm.~\ref{the:RSP}. 

\begin{proof} 
Assume that we can remotely prepare $\{\sigma^O_a\}$ from $\rho^{OQ}$,
and let $\{P_a^Q\}_a$ be the required POVM. Then $\sum_a \sigma^O_a =
\sum_a\tr_QP_a^Q\rho^{OQ} = \tr_Q (\sum_a P_a)\rho^{OQ} = \rho^O$, by
the definition of a POVM.

The converse is a generalization of the GHJW theorem \cite{GHJW} to
mixed density operators.  Suppose that $\sum_a\sigma_a^O = \rho^O$.
To construct the required POVM, first write each
$\sigma^O_a$ as an explicit mixture of pure states
\[ 
\sigma_a^O = \sum_m p_{a,m} \ketbra{\psi_{am}}{\psi_{am}}.
\]
By assumption, $ \rho^O = \sum_{a,m} p_{a,m}
\ketbra{\psi_{a,m}}{\psi_{a,m}}$. By filling in mixture terms with
$p_{a,m}=0$ if necessary, we can assume that the range of the index
$m$ is independent of $a$. Since $\rho^{OQ}=\ketbra{\Phi}{\Phi}$ is pure, we can write the $OQ$
state in Schmidt form
\begin{equation}
\ket{\Phi} = \sum_j\sqrt{q_j}\ket{\phi_j}\ket{\varphi_j},
\end{equation}
where the $\ket{\phi_j}$ and $\ket{\varphi_j}$ are orthonormal in the
Hilbert spaces of $O$ and $Q$, respectively.  Thus $\rho^O$ can also
be written as the mixture $\rho^O=\sum_jq_j\ketbra{\phi_j}{\phi_j}$.
By unitary freedom (for example, see~\cite{nielsen:qc2001a}, pg. 103),
\begin{equation}
\sqrt{p_{am}}\ket{\psi_{am}} = \sum_{j} u_{am,j}\sqrt{q_j}\ket{\phi_{j}},
\end{equation}
where the $u_{am,j}$ are the entries of a unitary matrix.
We can now define 
\begin{equation}
P_a^Q = \sum_m\sum_{j,j'}u_{am,j}^* u_{am,j'}\ketbra{\varphi_j}{\varphi_{j'}}.
\end{equation}
That $\sum_a P_a = \one$ follows from the unitarity condition
for $u_{am,j}$. To verify the partial trace condition, compute
\begin{eqnarray}
\tr_Q P_a^Q\rho^{OQ} &=& 
  \sum_{m}\sum_{j,j'} u_{am,j}^* u_{am,j'}\notag\\
  &&\tr_Q\left(\ketbra{\varphi_j}{\varphi_{j'}}
    \sum_{l,l'} \sqrt{q_lq_{l'}} \ketbra{\phi_l}{\phi_{l'}}\ketbra{\varphi_{l}}{\varphi_{l'}}\right)
  \notag\\ 
  &=& \sum_{m}\sum_{j,j'}
u_{am,j}^* u_{am,j'}\sqrt{q_jq_{j'}}\ketbra{\phi_{j'}}{\phi_{j}}\notag\\
 &=&\sum_{m}p_{am}\ketbra{\psi_{am}}{\psi_{am}}\notag\\
 &=&\sigma_a^O,
\end{eqnarray}
as desired.
\end{proof}

\section{Querier Discretization Bound} 
\label{app:b}

Here is the proof of Thm.~\ref{thm:qBound}.

\begin{proof}
Define $C(p,\cQ)=\sum_a\int C(\omega-g_a)p(g_a,\omega|\cQ) d\omega$,
where the $g_a$ are $\cQ$'s frequency estimates.
Then $C(p,\cQ)$ is the expected cost of $\cQ$ given prior $p$.
Let $g(a)$ be defined by
\begin{equation} 
\label{eq:guesses} 
g(a) = \mathrm{argmin}_{g}\left( \int C(\omega - g)
  p(a|\omega,\cQ)p(\omega)d\omega \right).
\end{equation}
Then $g(a)$ is the optimum frequency estimate $\cQ$ could make given
measurement outcome $a$. Let $\cQ_g$ be $\cQ$ modified to make the
frequency estimates $g(a)$. 

Let $B$ be the expression on the right-hand-side of
Eq.~(\ref{eq:qBound}).  We show that $C(p,\cQ_g)\geq S_C(p,F)-B$ for any
algorithm $\cQ$. Since $S_C(p,\rls)=\inf_{\cQ} C(p,\cQ)=\inf_\cQ C(p, \cQ_g)$,
the result follows.  We prove the bound in two steps. In the first
step we force the frequency estimates to lie in $[f_1, f_N]$ and in the
second we change them to lie in $F$.

For the first step, let $\tilde g(a)$ be the value in $[f_1,f_N]$
nearest to $g(a)$.  If $\tilde g(a)=g(a)$, then $C(\omega-g(a))\geq
C(\omega-\tilde g(a))-M(\omega)$ since $M(\omega)\geq 0$. If $\tilde
g(a) = f_1$, then one of the following holds: 1. $\omega \geq f_1$, in
which case $\tilde g(a)$ is nearer $\omega$ and on the same side, so
that $C(\omega-g(a))\geq C(\omega-\tilde g(a)) \geq C(\omega-\tilde
g(a))-M(\omega)$.  2. $\omega < f_1$, in which case $C(\omega-g(a))
\geq 0 = C(\omega - f_1) - C(\omega - f_1) = C(\omega - \tilde g(a)) -
M(\omega)$.  A similar argument works for $\tilde g(a) =
f_N$. Substituting the inequalities in the integral for $C(p,\cQ)$
we get
\begin{eqnarray}
C(p,\cQ_g) &\geq& \sum_a\int C(\omega - \tilde g(a))p(a|\omega,\cQ)p(\omega)d\omega
   \notag\\&&{} - \sum_a\int M(\omega) p(a|\omega, \cQ)p(\omega)d\omega\notag\\
&=& C(p,\cQ_{\tilde g}) - \int M(\omega)d\omega .
\label{eq:qd1}
\end{eqnarray}

For the second step, we modify $Q_{\tilde g}$ to $Q_{\tilde f}$, where
$\tilde f(a)$ is one of the elements of $F$ on either side of $\tilde
g(a)$. That is, because $f_1\leq \tilde g(a)\leq f_N$, there exists a
unique $j$ such that $f_j\leq \tilde g(a)\leq f_{j+1}$, and we set
$\tilde f(a)$ to either $f_j$ or $f_{j+1}$. Define $\lambda\in[0,1]$
by $\tilde g(a) = \lambda f_j +(1 - \lambda) f_{j+1}$. It is
convenient to let $Q_{\tilde f}$ be a ``mixed'' (randomized)
algorithm, where $\tilde f(a)=f_j$ with probability $\lambda$ and
$f_{j+1}$ with probability $1-\lambda$. Note that a mixed algorithm of
this sort cannot be better than the optimal one, that is
$C(p,Q_{\tilde f})\geq S_C(p,F)$.  To bound the cost, we consider a
given $a$ and $\omega$ and estimate the quantity
\begin{eqnarray}
c(\omega,a)&=&
   \lambda C(w-f_j)+(1-\lambda) C(w-f_{j+1}) \notag\\
    &&{}- C(\omega - \tilde g(a))\notag\\
   &=& \lambda \left(C(w-f_j)-C(\omega-\tilde g(a))\right)\notag\\
       &&{}+ (1-\lambda) \left(C(w-f_{j+1})-C(\omega-\tilde g(a))\right).\notag\\
\label{eq:c_omega_a}
\end{eqnarray}
Define $\omega_0=\omega - \tilde g(a)$, $\omega_l=\omega-f_{j+1}$
and $\omega_u=\omega-f_{j}$. 
We can estimate
\begin{eqnarray}
C(\omega)-C(\omega_0) &=& (\omega-\omega_0)C'(\omega_0)\notag\\
            &&{}+ \int_{0}^{\omega-\omega_0}\int_{0}^{x} C''(\omega_0+y)dy dx
          \notag\\
          &\leq&
          (\omega-\omega_0)C'(\omega_0) \notag\\
           &&{}+ \frac{1}{2}(\omega-\omega_0)^2\max_y C''(y) \notag\\
          &\leq&
          (\omega-\omega_0)C'(\omega_0)
           + \frac{b}{2} (\omega-\omega_0)^2.\notag\\
\end{eqnarray}
Substituting this bound for each summand of Eq.~\eqref{eq:c_omega_a}
gives
\begin{eqnarray}
c(\omega,a) &\leq&
     \lambda( (\omega_u-\omega_0)C'(\omega_0)
                + \frac{b}{2}(\omega_u-\omega_0)^2 )\notag\\
     &&{}+ (1-\lambda)( (\omega_l-\omega_0)C'(\omega_0)
                + \frac{b}{2}(\omega_l-\omega_0)^2 )\notag\\
  &=&
     \frac{b}{2}\left(
          \lambda (\omega_u-\omega_0)^2 
          + (1-\lambda) (\omega_l-\omega_0)^2 )\right)\notag\\
  &=&
     \frac{b}{2} \left(
        \lambda (\omega_u-\omega_0)(\omega_u-\omega_l)
        \right)\notag\\
  &\leq&
     \frac{b}{2} \frac{(f_{j+1}-f_{j})^2}{4},
\end{eqnarray}
where we first applied $\lambda
\omega_u+(1-\lambda)\omega_l=\omega_0$.  The next identity requires
applying
$(1-\lambda)(\omega_l-\omega_0)=-\lambda(\omega_u-\omega_0)$ to the
second summand, and the
final inequality is obtained by noting that
$\lambda(\omega_u-\omega_0)/(f_{j+1}-f_{j})$ is maximized at
$\lambda=1/2$.  We can apply the above inequalities to bound
$C(p,\cQ_{\tilde g})$ as follows:
\begin{eqnarray}
C(p,\cQ_{\tilde g}) 
 &=& \sum_a\int C(\omega - \tilde g(a)) p(a|\omega, \cQ)
     p(\omega)d\omega \notag\\
 &\geq& \sum_a\int \big( \lambda C(\omega - f_j) + ( 1 - \lambda ) C(\omega - f_{j+1})\notag\\
 &&{} - \max_j \frac{b}{8}(f_{j+1}-f_j)^2
       \big) p(a|\omega, \cQ)p(\omega)d\omega \notag\\
 &=& C(p, \cQ_{\tilde f}) - \max_j \frac{b}{8}(f_{j+1}-f_j)^2 \notag \\
 &\geq& S_C(p,F) - \max_j \frac{b}{8}(f_{j+1}-f_j)^2.
\label{eq:qd2}
\end{eqnarray}
To finish the proof, we combine Eqs.~(\ref{eq:qd1}) and~(\ref{eq:qd2}).
\end{proof}

\end{document}